\frenchspacing \setlength{\textwidth}{15cm}
\newcommand{\beq}{\begin{equation}}
\newcommand{\eeq}{\end{equation}}
\newcommand{\bed}{\begin{displaymath}}
\newcommand{\eed}{\end{displaymath}}
\newcommand{\bbeta}{\vec{\beta}}
\newcommand{\ppsi}{\vec{\psi}}
\newcommand{\btheta}{\vec\theta}
\newcommand \bdelta {\vec\delta}
\newcommand \bUpsilon{\vec\Upsilon}
\newcommand \bPsi{\vec\Psi}
\newcommand \bPhi{\vec\Phi}
\newtheorem{prop}{Proposition}
\def \bA {{\mathbf A}}
\def \ba {{\mathbf a}}
\def \bB {{\mathbf B}}
\def \bC {{\mathbf C}}
\def \bc {{\mathbf c}}
\def \bd{{\mathbf d}}
\def \bb {{\mathbf b}}
\def \be {{\mathbf e}}
\def \bf {{\mathbf f}}
\def \bK {{\mathbf K}}
\def \bH {{\mathbf H}}
\def \bI {{\mathbf I}}
\def \bL {{\mathbf L}}
\def \bl {{\mathbf l}}
\def \br {{\mathbf r}}
\def \bR {{\mathbf R}}
\def \bt {{\mathbf t}}
\def \bX {{\mathbf X}}
\def \bx {{\mathbf x}}
\def \by {{\mathbf y}}
\def \bu {{\mathbf u}}
\def \bV {{\mathbf V}}
\def \bU {{\mathbf U}}
\def \bT {{\mathbf T}}
\def \bZ {{\mathbf Z}}
\def \bz {{\mathbf z}}
\def \bW {{\mathbf W}}
\def \bzero {{\mathbf 0}}
\def \bSigma {{\mathbf \Sigma}}
\def \bLambda {{\mathbf \Lambda}}
\def \bOmega{{\mathbf \Omega}}
\def \uno {\mathbf 1}
\title{Small Area Estimation with Linked Data}
\author[$1$] {R. Chambers}
\author[$2$] {E. Fabrizi}
\author[$3$] {N. Salvati}
\affil[$1$] {{\small National Institute for Applied Statistics Research Australia, School of Mathematics and Applied Statistics, University of Wollongong }}
\affil[$2$] {{\small Dipartimento di Scienze Economiche e Sociali, Universit\`a Cattolica del Sacro Cuore di Piacenza }}
\affil[$3$] {{\small Dipartimento di Economia e Management, Universit\`a di Pisa}}
\date{}
\begin{document}
	
	%
	%
	%
	%
	
	
	\maketitle
	
	\begin{abstract}

		In Small Area Estimation data linkage can be used to combine values of the variable of interest from a national survey with values of auxiliary variables obtained from another source like a population register. Linkage errors can induce bias when fitting regression models; moreover, they can create non-representative outliers in the linked data in addition to the presence of potential representative outliers. In this paper we adopt a secondary analyst's point view, assuming limited information is available on the linkage process, and we develop small area estimators based on linear mixed and linear M-quantile models to accommodate linked data containing a mix of both types of outliers. We illustrate the properties of these small area estimators, as well as estimators of their mean squared error, by means of model-based and design-based simulation experiments. These experiments show that the proposed predictors can lead to more efficient estimators when there is linkage error. Furthermore, the proposed mean-squared error estimation methods appear to perform well.
		
	\end{abstract}
	\vspace{9pt}
	\textbf{Keywords}: Exchangeable linkage error; Finite population inference, Linear mixed models; Mean Squared Error estimation, Robust estimation.
	
	\section{Introduction}\label{intro}
	Estimates of finite population parameters are often needed for subsets (domains) of the population, defined either by geographical disaggregation (areas) or by other classification criteria (e.g. region by gender by age class). When the domain-specific portion of the available data is so small that standard estimators are unacceptably imprecise for most of the domains, we have a small area estimation (SAE) problem. See \cite{Pfe13} and \cite{RaoMol15} for general introductions to the topic. From now on we refer to the domains of interest as areas.

	Small area estimation methods complement available data, typically from a large population survey, with area specific auxiliary information. A standard setting is where it is reasonable to assume that the value $y_{ij}$ of the target variable for unit $j$ in area $i$ is related to a known vector of covariates $\bx_{ij}$ by means of a regression model. These $\bx_{ij}$ values, assumed to be known for both the survey sample and the rest of the population, are then used to predict the area parameter of interest.
	
	Data integration is fast becoming an intrinsic part of Official Statistics, in large part due to the increasing availability of administrative registers and other population data sources. Here we focus on the situation where the $y_{ij}$ values are measured in a sample survey but where the $\bx_{ij}$ are not measured in the same survey. Instead, these values are extracted from a population register and then linked to the sampled units. An illustrative example is where a limited number of variables are collected using the survey questionnaire and the sample records are then linked to unit level information stored in a separate population register in order to complete a dataset for small area estimation. If an error-free unique identifier exists in both the survey record and the population register, this linkage can be deterministic. However, in many cases such an identifier is not error free or does not exist, in which case we need to allow for record linkage errors.
	
	Due to its growing importance, record linkage has attracted considerable scientific interest. Broadly speaking, we can identify two main literature streams: the first concerned with how to link records when an error-free unique identifier is missing; the second focused on how to adjust statistical methods so that they are appropriate for the analysis of linked data containing linkage errors. For recent reviews of the first literature stream see \cite{Winkler2009}, \cite{Winkler2014} and \cite{Lahiri18}.
	
	It is widely recognized that overlooking linkage errors when analysing linked data can lead to biased estimates even if most records are correctly linked. Bias correction methods when fitting linear regression models to linked data are discussed in \cite{Scheurenwinkler1993}, \cite{Scheurenwinkler1997}, \cite{lahirilarsen2005}, \cite{OSR2009}, \cite{Kimchambers2012} and \citet{Lahiri18}. The impact of linkage errors on linear mixed models, which are often used in small area estimation, has received comparatively less attention \citep{Samartchambers2014}. More specifically, we are aware of just one other article \citep{Briscolini2018} where linear mixed models are used with linked data for small area estimation.
	
	However, there is another aspect to linkage errors that seems to have attracted much less attention. This is when linkage errors generate artificial outliers in the linked data set. Let $y_{ij}^\star$ denote the linked value corresponding to $y_{ij}$. Such an outlier can then be generated when there is linkage error, and the residual associated with the correctly linked pair $(y_{ij},\bx_{ij})$ is small, but the residual associated with the incorrectly linked pair $(y_{ij}^\star,\bx_{ij})$ is large. This can happen when the variables used in the matching process (such as names, addresses, identification codes) are independent of those used as regressors. In Figure \ref{pop_example} we illustrate this phenomenon using a synthetic population, which is described in more detail in Section \ref{mod:based}. In the upper panel a scatterplot shows the strong linear relationship between $x$ and $y$ when there are no linkage errors; in the lower panel the same relationship is shown when linkage errors occur at an overall rate of 28.5 per cent. In the lower panel incorrectly matched pairs $(y_{ij}^\star,\bx_{ij})$ are shown as open circles. Outlying residuals are evident.
	
	\begin{figure}[h]
		\centering    
		\makebox{\includegraphics[scale = 0.50]{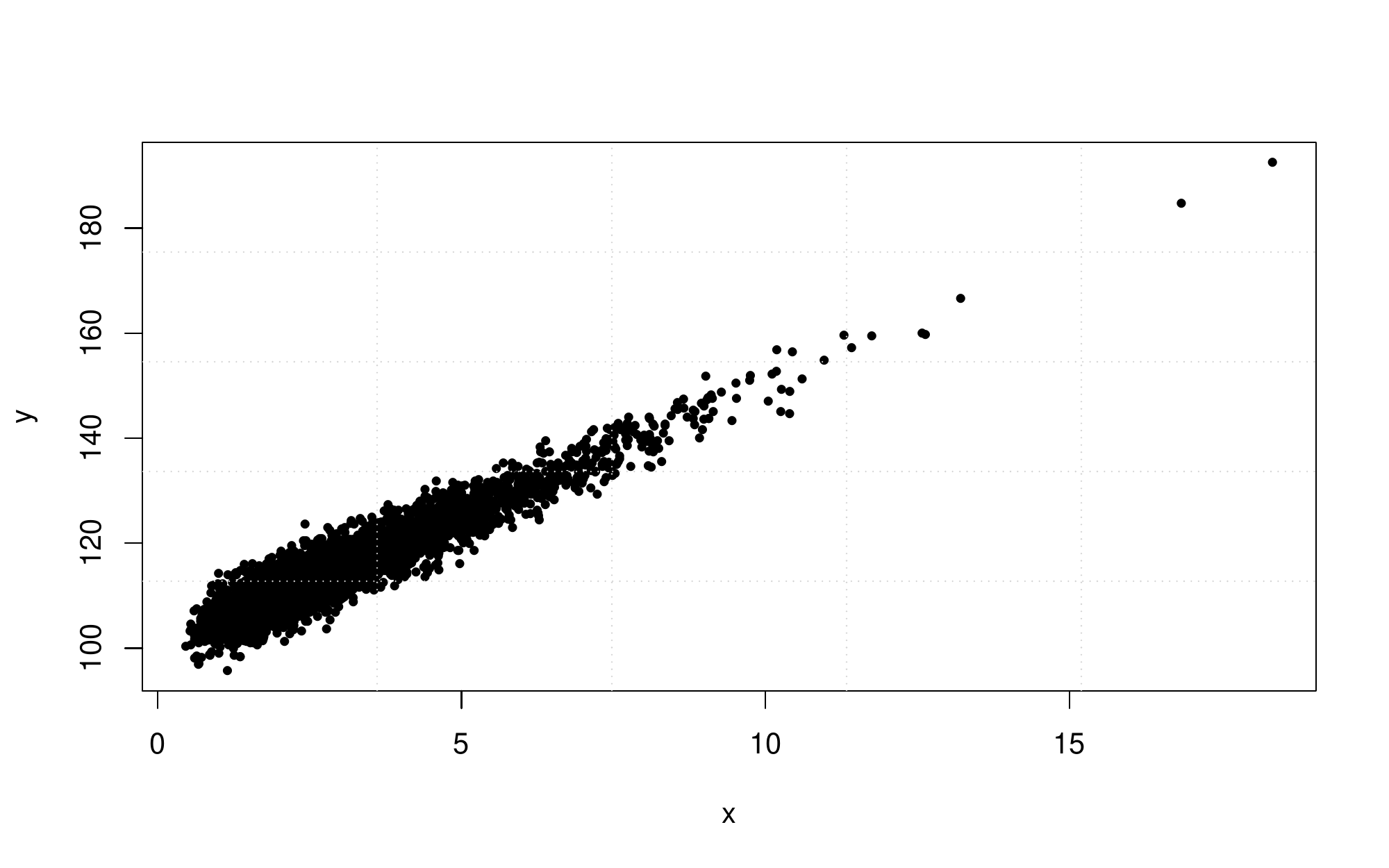}}\\
		\makebox{\includegraphics[scale = 0.50]{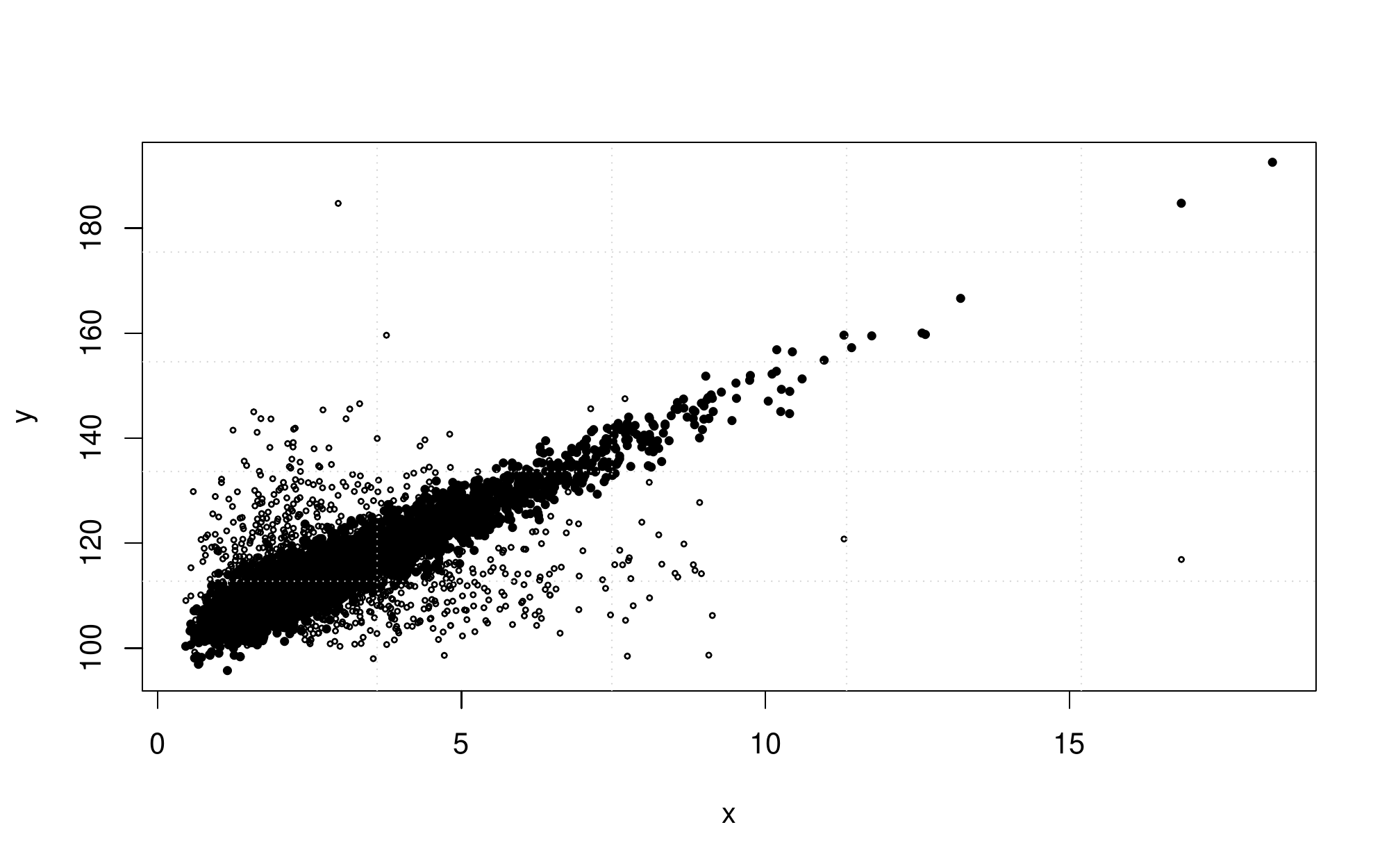}} 
		\caption{\label{pop_example} Scatterplots of two synthetic populations. In the upper panel the relationship between $x$ and $y$ when there are no linkage errors (filled circle). In the lower panel the same relationship when wrongly matched pair $(y_{ij}^\star,\bx_{ij})$ occurring (circle).}
	\end{figure}
	
	\cite{Cha86} first distinguished between representative and non-representative outliers in a survey sample. Using this distinction, these artificial outliers are non-representative, and so are fundamentally different from outliers associated with the correctly linked population units, which are representative. The problem is that is not possible to tell \textit{a priori} whether an outlier is induced by linkage error (and so is non-representative) or is representative. In particular, outliers due to linkage errors can violate the assumptions underpinning non-robust estimation methods, as well as cause bias problems for robust projective \citep{Cha14} estimation methods since downweighting linkage error-induced outliers does not generally rid the sample data of linkage errors. This problem is illustrated in the scatterplots shown in Figure \ref{pop_example1}, which are for an outlier-prone version of the same synthetic population underpinning Figure  \ref{pop_example}. Here the upper panel shows the relationship between $x$ and $y$ when there are representative outliers (denoted by filled triangles, point-up), whereas the scatterplot in the lower panel shows the same population when linkage errors leads to wrongly matched pairs $(y_{ij}^\star,\bx_{ij})$ (denoted by open circles). It is difficult to distinguish the residuals due to representative outliers from those due to linkage errors in this lower panel. 
	
	\begin{figure}[h]
		\centering    
		\makebox{\includegraphics[scale = 0.50]{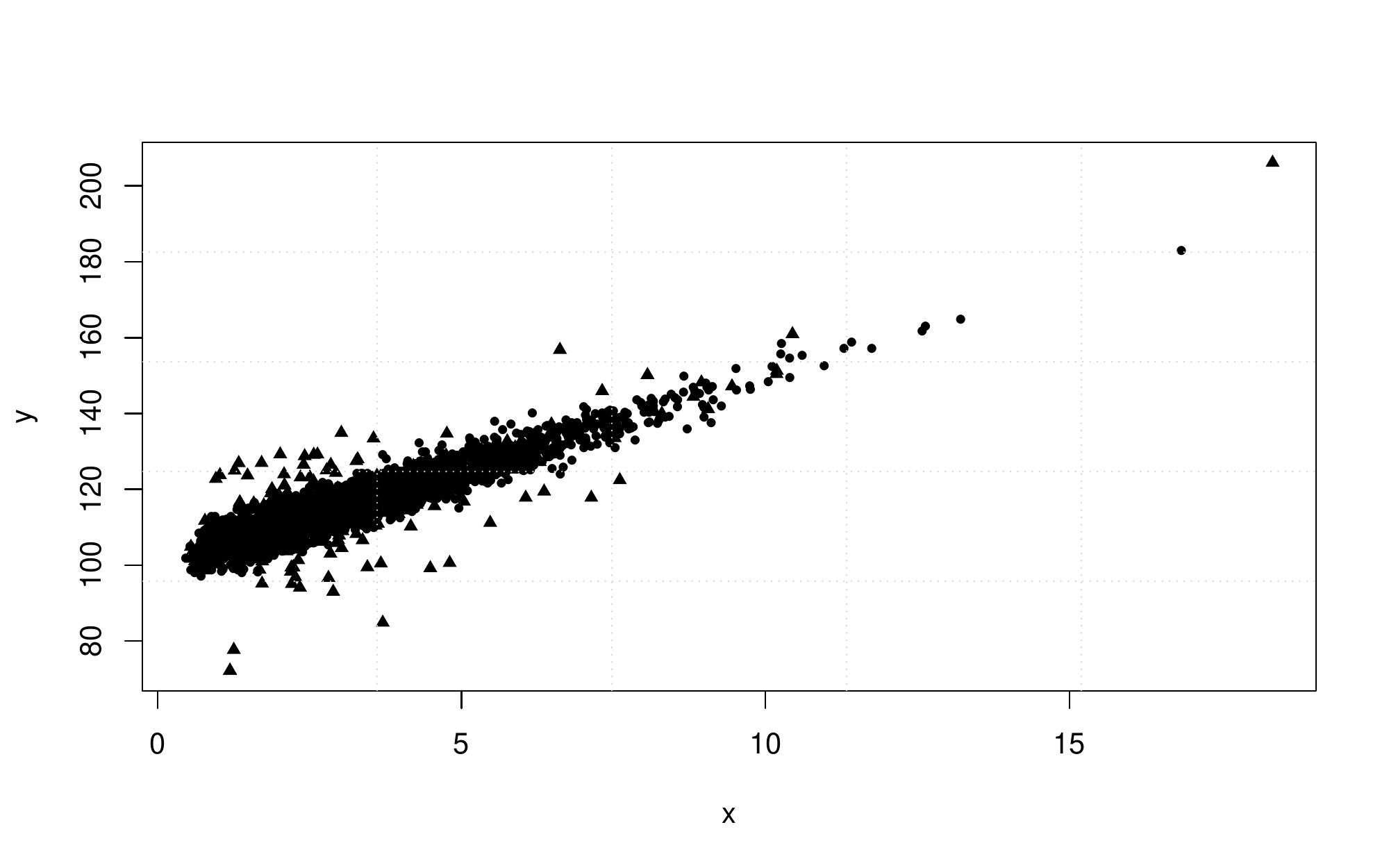}}\\
		\makebox{\includegraphics[scale = 0.50]{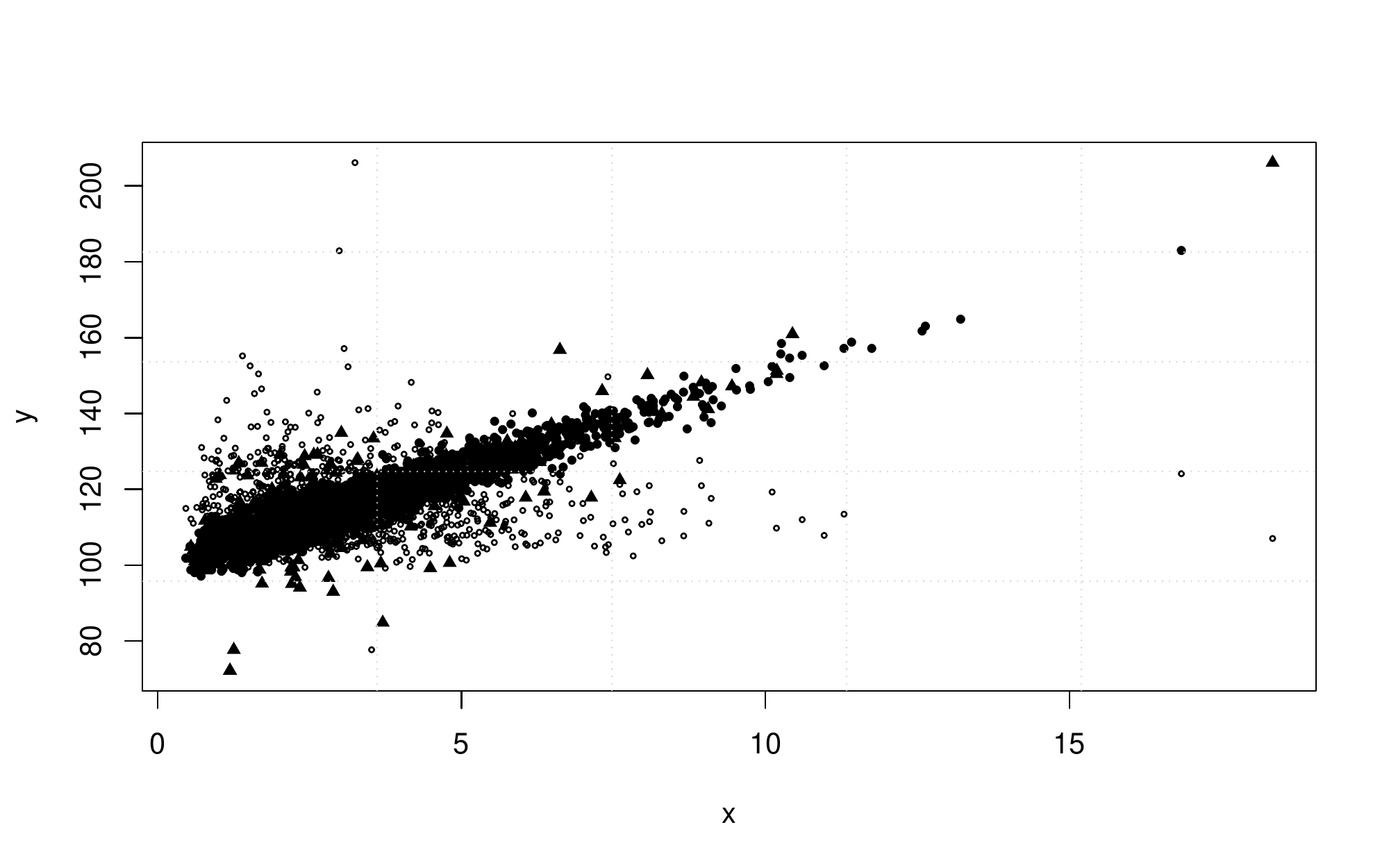}} 
		\caption{\label{pop_example1} Scatterplots of two synthetic populations. In the upper panel is represented the relationship between $x$ and $y$ when there are representative outliers (filled triangle point-up). In the lower panel the same relationship when also wrongly matched pair $(y_{ij}^\star,\bx_{ij})$ occurring (circle).}
	\end{figure}

	Two questions immediately arise when one considers Figure \ref{pop_example1}. The first is whether well-known outlier robust methods for small area estimation can adequately deal with the mix of representative and non-representative outliers that can potentially occur in a linked data situation. The second concerns appropriate modifications to these methods to allow for linkage errors. We consider both in what follows. 
	
	In this paper  we contribute to the literature by extending some popular small area estimators based on linear models, including the Empirical Best Linear Unbiased Predictor \citep[EBLUP,][]{Bat88}, its robust version \citep[REBLUP,][]{Sin09} and the M-quantile-based predictor \citep{Cha06, Tza10}, to non-deterministically linked data. We also propose analytical MSE estimators for the modified estimators that we introduce. In our  development we adopt a secondary analyst viewpoint, that is we assume that the researcher producing the small area estimates does not have access to all the information used in the linkage process. Instead, we assume that he/she has access to the linked data set (including area indicators, which are assumed to be without error) and is also provided with minimal information regarding the linkage quality. We characterise this minimal information via a simple exchangeable linkage error structure within specified poststrata, which are referred to as blocks below. In contrast, \citet{Briscolini2018} take a primary analyst viewpoint and so assume that a richer set of information on the linkage process is available; although not directly concerned with small area estimation, the same comment applies to \citet{Lahiri18}.
	
	The paper is organized as follows. Section \ref{sec:ass} is devoted to setting out the theoretical background and the assumptions of the linkage error model which is then used to extend the small area predictors.
	In Sections \ref{sec:eblup_star}, \ref{sec:REBLUP} and \ref{sec:MQ} we introduce the proposed extensions to the EBLUP, REBLUP and M-quantile-based predictors under linkage error and their corresponding MSE estimators. In Sections \ref{mod:based} and \ref{sec:db} the performances of these newly proposed predictors are empirically assessed, both in terms of point estimation performance as
	well as in terms of MSE estimation, by means of a model-based simulation study that considers a number of different scenarios as well as by a design-based simulation. Finally, in Section \ref{sec:conclusions} we summarize our main findings,  and provide directions for future research.

	\section{Background and assumptions}\label{sec:ass}
	
	For simplicity of exposition, we restrict our development to the case of two registers, one containing values $\bx_{ij}$, and the other containing values $y_{ik}$. Extension to the case of multiple linked registers can be carried out along the same lines set out in \citet{KimCha2015}. We assume that both registers contain no duplicates, correspond to the same finite population $U$ of size $N$, and include a set of unit identifiers (linking variables) that are measured without error and are used for matching purposes. In principle, both registers can be matched on a one to one basis. However the linking variables are not unique identifiers, and linkage errors are possible.
	
	We model linkage errors from a secondary analyst point of view. In particular, we assume that the analyst has access to identifiers that allow each register to be partitioned into $Q$ non-overlapping subsets or blocks such that linkage errors are homogeneous within a block and possibly heterogeneous between blocks. The block identifiers themselves are assumed to depend on one or more linking variables, and as a consequence linkage errors are possible within blocks but not across blocks. We also assume that $U$ can be partitioned into $D$ non-overlapping areas or domains, and that the linking is carried out within an area, so two population units from different areas cannot be erroneously matched.  Cross-classifying $U$ by the area and block indicators, we then define $U_{iq}$ to be the subset of $N_{iq}$ population units that make up the segment of area $i$ nested within block $q$, with $i = 1, \dots, D$ and $q = 1, \dots, Q$. We use {\color{black}$\bx_{iqj}$ and $\by_{iqk}$} to denote individual population values from the two registers associated with this $iq$ cell.
	
	By definition, secondary analysts, i.e., analysts who are {\it not} involved in the data linking process, do not have access to the detailed information used in matching. The data available to such an analyst are therefore somewhat limited, though some information about the accuracy of the linkage process may be available. As we shall see, access to this linkage {\it paradata} is necessary before one can account for linkage errors in analysis. Furthermore, information about how the sample containing the linked data was obtained is also necessary. Since we focus on unit level modelling for small area estimation, we shall assume that the sample was drawn from register $X$ containing the {\color{black}$\bx_{iqj}$} and that this sample was then linked to register $Y$ containing the {\color{black}$y_{iqk}$}; for example $X$ could be a Census register and $Y$ could be a tax register. A related scenario is where $Y$ corresponds to a frame with contact information that is used to select a sample of units in $U$, with this sample then linked to a register $X$. What is important in both cases, however, is that the register (or frame) $Y$, with its area and block identifiers, is not available. What is available are the linked sample data (including area and block identifiers) plus, at a minimum, area by block-specific summary data from $X$.
	
	Let $s_{iq}$ denote the set corresponding to the $n_{iq}$ population indexes of the sample units in small area $i$ and block $q$, with  $n = \sum_{i = 1}^{D} \sum_{q=1}^{Q}n_{iq}$. The set containing the $N_{iq}-n_{iq}$ indices of the non-sampled units in small area $i$ and block $q$ is denoted by $r_{iq}$. For ease of notation, we assume that all areas are sampled, noting that non-sampled areas are easily accommodated. We also assume that sampling is non-informative for the small area distribution of {response variable given the covariates}, allowing us to use population level models with the sample data. Let $\by_{iq}$ denote the $N_{iq}$ vector of values for {\color{black}$y_{iqk}$} in $U_{iq}$, with $\bX_{iq}$ denoting the $N_{iq}\times p$ matrix with rows defined by the {\color{black}$\bx_{iqj}$} values of the corresponding population units. The sample components of these quantities are then denoted $\by_{siq}$ and $\bX_{siq}$ respectively. Unless the linkage is perfect, $\by_{iq}$ is unknown. Instead, what we observe is a a sample from the vector $\by_{iq}^\star$ containing the linked values $y_{iqj}^\star$ generated by the linkage process. Given that both registers can be matched on a one to one basis, we characterise the relationship between $\by_{iq}$ and $\by_{iq}^\star$ via a latent random permutation matrix $\bA_{iq} = [a_{jk}^{iq}]$ of order $N_{iq}$. That is, we put
	\begin{eqnarray}
	\label{lem} \by_{iq}^\star=\bA_{iq}\by_{iq}. 
	\end{eqnarray}
	\noindent The distribution of linkage errors in $U_{iq}$ is then determined by the distribution of $\bA_{iq}$. In general, this distribution depends on all the information used in the linking process, which, as has already been noted, is typically unavailable. However, a minimum amount of information about the accuracy of the linkage may be available, in which case a secondary analyst should be able to model the linkage errors within $U_{iq}$ via a simple exchangeable linkage error (ELE) specification:
	\begin{eqnarray}
	\label{ele1} Pr(\text{correct linkage})&=&Pr(a_{jj}^{iq}=1)=\lambda_q\\
	\label{ele2} Pr(\text{incorrect linkage})&=&Pr(a_{jk}^{iq}=1)=\gamma_q=\frac {1-\lambda_q}{N_{iq}-1},
	\end{eqnarray}
	with $j,k=1\dots,N_{iq}$. Note that although units from two different areas within a block can never be (incorrectly) matched, we assume that probabilities of correct linkage $\lambda_q$ are the same within all area segments contained within that block. Furthermore, given that we know which block is being referred to, this probability is the same irrespective of the values of $\bX_{iq}$ and $\by_{iq}$. That is, linkage is non-informative for the distribution of $\by_{iq}$ given $\bX_{iq}$. As a consequence 
	\[
	E_{\bA}(\bA_{iq}|\bX_{iq})=\bT_{iq}=(\lambda_q-\gamma_q)\mathbf{I}_{N_{iq}}+\gamma_q\mathbf{1}_{N_{iq}}\mathbf{1}_{N_{iq}}^{\prime},
	\]
	where $\mathbf{I}_{N_{iq}}$ denotes the identity matrix of order $N_{iq}$, $\mathbf{1}_{N_{iq}}$ denotes a vector of ones of length $N_{iq}$ and $E_{\bA}(.)$ denotes expectation with respect to the linkage error model. It immediately follows that we can write,
	\begin{equation}\label{noninfo} 
	E_M E_{\bA}(\bA_{iq}\by_{iq}|\bX_{iq})=E_{\bA}(\bA_{iq}|\bX_{iq})E_M(\by_{iq}|\bX_{iq})=\bT_{iq}E_M(\by_{iq}|\bX_{iq}). 
	\end{equation}
	Here $E_M(.)$ denotes expectation with respect to the model for $\by_{iq}$ given $\bX_{iq}$.
	
	Without loss of generality, we partition the matrix $\bA_{iq}$ as 
	\begin{displaymath}
	\bA_{iq}=\left[\begin{array}{c}\bA_{siq} \\  \bA_{riq} \end{array}\right],
	\end{displaymath}
	where $\bA_{siq}$ is a $n_{siq} \times N_{iq}$ matrix and $\bA_{riq}$ is $(N_{iq}-n_{iq}) \times N_{iq}$ matrix. These matrices contain the rows of $\bA_{iq}$ corresponding to sampled and non-sampled units, respectively. Then $\by^\star_{siq}$, $\bX_{iq}$ are observed, i.e. available to the analyst, while $\by_{siq}$ is not observed, where
	\begin{equation}
	\label{samplelem}
	\by_{siq}^\star=\bA_{siq}\by_{iq}.
	\end{equation}
	As noted above, the matrix $\bA_{siq}$ is not observable, but  under the ELE assumptions (\ref{ele1}) and (\ref{ele2}) we have that
	\begin{equation}\label{tsiq}
	E_{\bA}(\bA_{siq}|\bX_{iq})=\bT_{siq}=
	\begin{array}{c|c}
	\big[(\lambda_q-\gamma_q) \bI_{n_{iq}}~~ &~~\mathbf{0}_{r_{iq}} \big] + \gamma_{q} \uno_{n_{iq}} \uno_{N_{iq}}^\prime,
	\end{array}
	\end{equation}
	where $\mathbf{0}_{r_{iq}}$ is a $n_{iq} \times (N_{iq}-n_{iq})$ matrix of zeroes.
	We shall assume that the values of $\lambda_q$ are known or can be accurately estimated from the information in the linkage paradata. Unless stated otherwise, from now on we therefore condition our analysis on these values of $\lambda_q$ (and hence on $\bT_{siq}$). We briefly discuss how to account for the uncertainty due to estimation of this parameter in Section \ref{sec:conclusions}.
	More generally, as far the analyses of this paper are concerned, we shall assume that the sampled rows and the column means of $\bX_{iq}$ are known. As a consequence, the matrix
	\begin{equation}
	\bX_{siq}^\star=E_{\bA}\big(\bA_{siq}\bX_{iq} | \bX_{iq} \big)=\bT_{siq}\bX_{iq}=
	\big\{(\lambda_q-\gamma_q) \bX_{siq} + \gamma_{q} N_{iq}  \uno_{n_{iq}}\bar{\bx}_{iq}\big\},
	\end{equation} 
	will be treated as known when conditioning on $\lambda_q$.
	
	%
	%
	\section{Linear mixed models for small area estimation with linked data}\label{sec:eblup_star}
	Linear mixed models for population unit data are widely used for SAE. These models include area-specific random effects that are used to characterise area level heterogeneity in the model residuals. See  \citet{Bat88} for an early example of their application. A general specification for a unit level linear mixed model used in SAE is
	\begin{equation}
	\by=\bX\bbeta+\bZ\bu+\be,
	\label{EBLUP_model}
	\end{equation}
	where $\by$ and $\bX$ denote the population level vector of response variable and matrix of covariates, respectively; $\bu=(\bu_1^{\prime},\cdots,\bu_D^{\prime})$ is a vector of dimension $Dm$ made up of $D$ independent realizations $\{\bu_i; i=1,\cdots,D\}$ of a $m$-dimensional random area effect with $\bu \sim N(\bzero, \bSigma_u)$ and $\be \sim N(\bzero, \bSigma_e)$ is the $N\times 1$ vector of individual errors. Since the random effects $\bu$ and the individual errors $\be$ are independent, the covariance matrix of $\by$ is $\bSigma=\bSigma_e+\bZ\bSigma_u\bZ^\prime$. Here $D$ is the total number of small areas that make up the population and $m$ is the dimension of $\bz_{ij}$ so that $\bZ$ is an $N \times Dm$ matrix of fixed known constants that do not vary within an area. We assume that the covariance matrices $ \bSigma_u$ and $ \bSigma_e$ are defined in terms of a lower dimensional set of parameters $\bdelta=(\delta_1, \cdots, \delta_K)$, which are typically referred to as the variance components of model \eqref{EBLUP_model}, whereas the vector $\bbeta$ stands for the $p\times 1$ vector of regression coefficients. Provided that it is reasonable to assume that the distribution of the unit level residuals in $\be$ remains the same from block to block, then at the $U_{iq}$ sub-population level, (\ref{EBLUP_model}) can be written as:
	\begin{equation}\label{Uiqlevel}
	\by_{iq}=\bX_{iq}\bbeta+\bZ_{iq}\bu_{i}+\mathbf{e}_{iq},
	\end{equation}
	where $\bZ_{iq}$ is the $N_{iq} \times m$ incidence matrix defined by the rows of $\bZ$ corresponding to area $i$ units in block $q$.
	
	Now suppose that linked data are used to define the sample values of the response variable within sub-population $U_{iq}$. We model these data by assuming that they are obtained by non-informative sampling (given the covariates defining $\bX_{iq}$) from the outcome of a hypothetical one to one and complete linking of the records defining $\bX_{iq}$ to the records defining $\by_{iq}$. That is, the linkage error process within sub-population $U_{iq}$ can be characterised by the permutation matrix $\bA_{iq}$, and (\ref{lem}) applies. Since the sample selection process is non-informative, we can use (\ref{samplelem}) to write down a model for the linked sample values $\by^\star_{siq}$ that takes into account the linkage error process:
	\begin{equation}\label{sample_mod}
	\by^\star_{siq}=\bA_{siq}\by_{iq}=\bA_{siq}\bX_{iq}\bbeta+\bZ_{siq}\bu_{i}+\mathbf{e}^\star_{siq},
	\end{equation}
	where $E_M(\mathbf{e}^\star_{siq})=\mathbf{0}$ and $V_M(\mathbf{e}^\star_{siq})=\bSigma_{seiq}$, i.e. the sampled rows of the area $i$ by block $q$ component of $\bSigma_e$. Also, since the values contained in the columns of $\bZ_{iq}$ do not change within an area, and because we assume that matching across areas is impossible, it follows that $\bA_{siq}\bZ_{iq}=\bZ_{siq}$, i.e. linkage errors have no impact on the sampled rows of $\bZ_{iq}$.
	We then have that
	\begin{equation}\label{jointexp}
	E_{\bA,M}(\by^\star_{siq}|\bX_{iq})=\bX^\star_{siq}\bbeta
	\end{equation}
	and 
	\begin{equation}\label{jointvar}
	V_{\bA,M}(\by^\star_{siq}| \bX_{iq})=\bSigma_{siq}=\bZ_{siq}\bSigma_{\bu_i} \bZ_{siq}^\prime+\bSigma_{seiq}+\bV_{siq},
	\end{equation}
	where $E_{\bA,M}$ and $V_{\bA,M}$ are the joint expectation and variance with respect to the linkage error model and the linear mixed model respectively. Here $\bSigma_{\bu_i}$ is a $m \times m$ matrix, corresponding to the $i$th diagonal block of $\bSigma_{u}$, and $\bV_{siq}=V_{\bA}(\bA_{siq}\bX_{iq}\bbeta)$. An exact expression for $\bV_{siq}$ is unavailable. However, using the arguments set out in \citet{OSR2009}, we can write down the approximation
	\begin{equation}\label{vsiq}
	\bV_{siq} \approx diag\{ v_{siq} \} = diag\left((1-\lambda_q)(\lambda_q (f_{iqj}-\bar{f}_{siq})^2+\bar{f}^{(2)}_{siq}  -\bar{f}_{siq}^2)\right), ~ j=1,\dots,n_{iq},
	\end{equation} 
	where  $\bf_{siq}=\{f_{iqj}\}=\bx_{ijq}^{\prime}\bbeta$ and $\bar{f}_{siq}$, $\bar{f}_{siq}^{(2)}$ denote the block $q$ averages of the components of $\bf_{siq}$ and their squares, respectively.

\begin{prop}\label{prop1}
		Under population model \eqref{EBLUP_model}, and assuming both non-informativeness of the sampling design, so the sample model \eqref{sample_mod} holds, and non-informativeness of the linkage error, as expressed in \eqref{noninfo}, it can be shown that:
		\begin{enumerate}
			\item[i)] the Best Linear Unbiased Estimator (BLUE) for $\bbeta$ given linked sample data is
			\begin{equation}\label{wls}
			\tilde{\bbeta}^\star =\left(\sum_{i=1}^{D} \sum_{q \in i}\bX_{siq}^{\star\prime} \bSigma_{siq}^{-1} \bX_{siq}^{\star}\right)^{-1}\left(\sum_{i=1}^{D}\sum_{q \in i}\bX_{siq}^{\star\prime} \bSigma_{siq}^{-1} \by_{siq}^{\star}\right),
			\end{equation}
			\item[ii)] the Best Linear Unbiased Predictor (BLUP) of $\bu_{i}$ given these data, and assuming an ELE linkage error structure within sub-population $U_{iq}$, is:
			\begin{eqnarray}\label{rablup}
			\tilde{\bu}_{i}^{\star}&=&\sum_{q \in i} \bSigma_{\bu_i} \bZ_{siq}^{\prime} \bSigma_{siq}^{-1} \left( \by^{\star}_{siq}-\bX^{\star}_{siq}\tilde\bbeta^{\star} \right)\label{rablup} \\
			&=&\sum_{q \in i} \bSigma_{\bu_i} \bZ_{siq}^{\prime} \bSigma_{siq}^{-1}\Big\{ \by^{\star}_{siq}-\bX_{siq}\tilde\bbeta^{\star}+ (1-\lambda_q)\frac{N_{iq}}{N_{iq}-1}\left(\bX_{iq}-\mathbf{1}_{n_{iq}}\bar{\bx}^\prime_{iq} \right)\tilde\bbeta^{\star}\Big\}.\label{eblup_res_1}
			\end{eqnarray} 
		\end{enumerate}
\end{prop}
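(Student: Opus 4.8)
The plan is to observe that the best \emph{linear} unbiased estimator and predictor depend on the distribution of the linked response $\by^\star_{siq}$ only through its first two moments, which have already been computed in \eqref{jointexp} and \eqref{jointvar}. Consequently, although the random permutation matrix $\bA_{siq}$ enters the sampling model \eqref{sample_mod} nonlinearly, for the purpose of deriving a linear predictor we may replace \eqref{sample_mod} by the equivalent Aitken-type linear mixed model in which $\bX^\star_{siq}=\bT_{siq}\bX_{iq}$ plays the role of a known mean design matrix and $\bSigma_{siq}$ the role of a known covariance matrix. With this replacement in hand, parts (i) and (ii) reduce to the classical Gauss--Markov and mixed-model prediction results applied to an effective model with response $\by^\star_{siq}$, mean $\bX^\star_{siq}\bbeta$, and covariance $\bSigma_{siq}$.

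For part (i) I would apply the generalized Gauss--Markov (Aitken) theorem to this effective model: among linear unbiased estimators of $\bbeta$, the minimum-variance choice solves the generalized least squares normal equations $\big(\sum_{i,q}\bX_{siq}^{\star\prime}\bSigma_{siq}^{-1}\bX_{siq}^{\star}\big)\bbeta=\sum_{i,q}\bX_{siq}^{\star\prime}\bSigma_{siq}^{-1}\by_{siq}^{\star}$. Because the area effects $\bu_i$ are independent across areas and, by assumption, linkage errors never cross area boundaries, the joint covariance is block diagonal across areas, and the per-cell covariance $\bSigma_{siq}$ from \eqref{jointvar} enters each term; this is what lets the normal equations separate into the double sum over $i$ and $q$, and inverting the coefficient matrix gives $\tilde\bbeta^\star$, which is \eqref{wls}.

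For part (ii) I would use the standard representation of the BLUP, $\tilde\bu_i=\mbox{Cov}(\bu_i,\by^\star)\,\{V(\by^\star)\}^{-1}(\by^\star-E_{\bA,M}(\by^\star))$, evaluated with $\bbeta$ set to its BLUE. The only nonstandard input is the cross-covariance: the area effect enters \eqref{sample_mod} solely through $\bZ_{siq}\bu_i$, and since $\bA_{siq}\bZ_{iq}=\bZ_{siq}$ (linkage leaves $\bZ_{siq}$ untouched, as already noted), one obtains $\mbox{Cov}(\bu_i,\by^\star_{siq})=\bSigma_{\bu_i}\bZ_{siq}^\prime$. Substituting this together with $\bSigma_{siq}$ yields the first line \eqref{rablup}. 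To pass to the explicit form \eqref{eblup_res_1} I would insert $\bX^\star_{siq}=(\lambda_q-\gamma_q)\bX_{siq}+\gamma_q N_{iq}\uno_{n_{iq}}\bar{\bx}^\prime_{iq}$ into the residual $\by^\star_{siq}-\bX^\star_{siq}\tilde\bbeta^\star$ and collect terms; using $\gamma_q=(1-\lambda_q)/(N_{iq}-1)$ one checks that $1-(\lambda_q-\gamma_q)$ and $\gamma_q N_{iq}$ both equal $(1-\lambda_q)N_{iq}/(N_{iq}-1)$, which factors the linkage correction into the single centered term $(1-\lambda_q)\tfrac{N_{iq}}{N_{iq}-1}\big(\bX_{siq}-\uno_{n_{iq}}\bar{\bx}^\prime_{iq}\big)\tilde\bbeta^\star$.

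The conceptually decisive step is the first: the passage from the permutation-driven model \eqref{sample_mod} to its second-moment surrogate, legitimate precisely because we restrict attention to linear predictors and so never need the intractable exact law of $\bA_{siq}$. The remaining work is essentially bookkeeping, but two points deserve care. In part (i) the separation of the normal equations into a sum over cells rests on the covariance being block diagonal across areas, guaranteed by the independence of the $\bu_i$ together with the non-crossing of linkage across area boundaries. In part (ii) one must identify the cross-covariance $\mbox{Cov}(\bu_i,\by^\star_{siq})=\bSigma_{\bu_i}\bZ_{siq}^\prime$ correctly through the invariance $\bA_{siq}\bZ_{iq}=\bZ_{siq}$, and then carry out the coefficient bookkeeping that collapses $\bX^\star_{siq}\tilde\bbeta^\star$ into the centered correction of \eqref{eblup_res_1}; the summation over $q$ is what pools the information each block carries about the common effect $\bu_i$.
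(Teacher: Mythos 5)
Your proposal is correct and follows essentially the same route as the paper's own proof: Appendix A likewise replaces the permutation model by its second-moment surrogate (unbiasedness constraint $\ba^{\prime}\bX_{s}^{\star}=\bl^{\prime}$ and the MSE $\ba^{\prime}\bSigma_s\ba+\bc^{\prime}\bSigma_{\bu}\bc-2\ba^{\prime}\bZ\bSigma_{\bu}\bc$ assembled from \eqref{jointexp}, \eqref{jointvar} and $Cov_{\bA,M}(\by_s^{\star},\bu)=\bZ\bSigma_{\bu}$), then invokes the standard constrained-minimization BLUP derivation of Rao and Molina, Section 5.6.1, and finishes with the same ELE coefficient bookkeeping for \eqref{eblup_res_1} (your $\bX_{siq}$ in the centered term is the correct reading; the $\bX_{iq}$ in the displayed \eqref{eblup_res_1} is a typo, as the paper's own algebra confirms). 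One caution: the separation of the normal equations into a double sum over $i$ \emph{and} $q$ does not follow from block diagonality across areas alone, as you assert, since the shared effect $\bu_i$ induces a nonzero cross-block covariance $\bZ_{siq}\bSigma_{\bu_i}\bZ_{siq'}^{\prime}$ within an area; both you and the paper are in fact relying on the stronger cell-level working structure $\bSigma_s=\diag(\bSigma_{siq})$ asserted in the remark of Appendix A.
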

	\begin{proof} 
		Formulas \eqref{wls} and \eqref{rablup} are modified version of the formulas obtained following standard BLUP derivations \citep[see][]{Rob91,
			RaoMol15}. See Appendix A for details. Note that we use the notation $q \in i$ above because we allow for the case where some areas only contain units from some of the blocks. Note also that \eqref{eblup_res_1} follows from \eqref{rablup} under the assumed ELE model since
		$
		\by^\star_{siq}-\bX^\star_{siq}\tilde{\bbeta}^\star 
		= \by^\star_{siq} - \big\{(\lambda_q-\gamma_q)\bX_{siq}+\gamma_qN_{iq}\uno_{n_{iq}}\bar{\bx}^\prime_{iq}   \big\}\tilde\bbeta^\star 
		$
		and $
		-\lambda_q\bX_{siq}\hat\bbeta^\star =(1-\lambda_q)\bX_{siq}\tilde\bbeta^\star - \bX_{siq}\tilde\bbeta^\star
		$.
		This implies
		$$
		\by^\star_{siq}-\bX^\star_{siq}\tilde{\bbeta}^\star 
		= \by^\star_{siq}-\bX_{siq}\tilde{\bbeta}^\star + (1-\lambda_q)\bX_{siq}\tilde{\bbeta}^\star
		+\gamma_q \bX_{siq}\tilde{\bbeta}^\star - \gamma_qN_{iq}\uno_{n_{iq}}\bar{\bx}^\prime_{iq}\tilde{\bbeta}^\star.
		$$
		As $\gamma_q=(1-\lambda_q)/(N_{iq}-1)$ and $(1-\lambda_q)=\gamma_q(N_{iq}-1)\gamma_q$ it follows that 
		$
		(1-\lambda_q)\bX_{siq}\tilde\bbeta^\star +\gamma_q\bX_{siq}\tilde\bbeta^\star =
		(1-\lambda_q)\frac{N_{iq}}{N_{iq}-1}\bX_{siq}\tilde\bbeta^\star
		$
		from which \eqref{eblup_res_1} easily follows.
	\end{proof}
	
	{\color{black} From (\ref{lem}) it is straightforward to see that the sum of the values making up $\by_{iq}$ will be the same as the corresponding sum for $\by_{iq}^\star$. Consequently the small area means $\bar{Y}_i$ and $\bar{Y}_i^{\star}$ will be identical, as will their respective BLUPs. Given (\ref{jointexp}) and (\ref{jointvar}), and the results set out in Proposition \ref{prop1} above, this BLUP (referred to from now on as the $^\star$BLUP) can be written
		\begin{equation}\label{BLUP_star}
		\tilde{\bar{Y}}_i^{\star BLUP}=N_i^{-1}\left\{ n_i \bar{y}_{si}^\star+(N_i-n_i)\left[ \bar{\bx}_{ri}^{\star\prime} \tilde{\bbeta}^{\star}+\bar\bz_{ri}^{\prime}\tilde{\bu}^{\star}  \right]\right\},
		\end{equation}
		where $\bar{y}_{si}^\star=n_{i}^{-1} \sum_{q \in i}\sum_{j \in s_{iq}}y_{iqj}^{\star}$ and $\bar{\bx}_{ri}^{\star}$ and $\bar\bz_{ri}$ denote the vectors of average values of $\bX_{i}^{\star}$ and $\bZ_{i}$ for the $N_{i}-n_{i}$ non-sampled units in the small area $i$.
	}

	The `empirical' versions of \eqref{wls} and \eqref{rablup}, denoted $\hat{\bbeta}^\star$ and $\hat{\bu}_{i}^{\star}$ respectively, are obtained by substituting estimates $\hat\bdelta$ for the unknown variance components $\bdelta$ that define the $\bSigma_{siq}$. {\color{black} These estimates can be obtained using either the pseudo maximum likelihood (pseudo-ML) or the pseudo restricted maximum likelihood (pseudo-REML) approach of \citet{Samartchambers2014}, and only depend on the conditional moments \eqref{jointexp} and \eqref{jointvar}. Note that the resulting estimators are called as pseudo-ML  (or pseudo-REML) because their estimating functions are based on the assumption that the matrices $\bSigma_{siq}$ are known.}
	The EBLUP of the the small area mean $\bar{Y}_i$ is then defined by substituting $\hat{\bbeta}^{\star}$ and $\hat{\bu}^{\star}$ for $\tilde{\bbeta}^{\star}$ and $\tilde{\bu}^{\star}$ respectively in (\ref{BLUP_star}). This is denoted $\hat{\bar{Y}}_i^{\star EBLUP}$, and referred to as the $^\star$EBLUP,  in what follows.
	
	Ignoring the $N_{iq}/(N_{iq}-1)$ term in \eqref{eblup_res_1}, we see that the $j$th component of the residual vector $\by^{\star}_{siq}-\bX^{\star}_{siq}\tilde\bbeta^{\star}$ is $y_{iqj}^{\star}-\bx_{iqj}^{\prime}\tilde\bbeta^{\star}+\left(1-\lambda_q)(\bx_{iqj}-\bar{\bx}_{iq}\right)^{\prime}\tilde{\bbeta}^{\star}$. That is, this modified residual can be interpreted as the naive residual that ignores linkage error plus a bias correction that increases in absolute value as the probability of an incorrect match increases and also as the leverage exerted by the individual fitted value $\bx_{iqj}^{\prime}\tilde{\bbeta}^{\star}$ increases.
	
	Alternatively, we can note the approximation
	\begin{equation}\label{eblup_res_2}
	\by^{\star}_{siq}-\bX^{\star}_{siq}\tilde\bbeta^{\star} \approx \lambda_q \left(\by^{\star}_{siq}-\bX_{siq}\tilde\bbeta^{\star}\right)+ (1-\lambda_q)\left(\by_{siq}^\star-\mathbf{1}_{n_{iq}}\bar{\bx}^\prime_{iq} \tilde\bbeta^{\star}\right).
	\end{equation}
	When the probability of an incorrect match is significantly greater than zero, the second term on the right hand side of \eqref{eblup_res_2} can be unstable. Consequently we consider an alternative expression for the predicted area effect that ignores this second term. This leads to a modified predictor of $\bu_i$ of the form
	
	\begin{equation}\label{rablup_mod}
	\tilde{\bu}_{i}^{\star\star}=\sum_{q \in i} \bSigma_{\bu_i} \bZ_{siq}^{\prime} \bSigma_{siq}^{-1} \lambda_q \left( \by^{\star}_{siq}-\bX_{siq}\tilde\bbeta^{\star} \right) .
	\end{equation} 
	
	The corresponding versions of the BLUP and EBLUP are referred to as $^{\star \star}$BLUP and $^{\star \star}$EBLUP respectively below.
	
	\subsection{MSE estimation for the $^\star$EBLUP}\label{sec:MSE:EBLUP}
	
	Methods for estimating the unconditional MSE of small area EBLUPs are typically based on averaging over the distribution of the random area effects, with the standard estimator of the unconditional MSE of the EBLUP predictor being the one suggested by \citet{Pra90}. In what follows we derive an estimator of the unconditional MSE of $\hat{\bar{Y}}_i^{\star EBLUP}$ along similar lines. In particular, we assume the regularity conditions (RCs) 1-6 set out in Appendix B and use the decomposition
	\begin{equation}\label{MSE_eblup}
	MSE_{\bA,M}(\hat{\bar{Y}}_i^{\star EBLUP})=MSE_{\bA,M}(\tilde{\bar{Y}}_i^{\star BLUP})+E_{\bA,M}\left((\hat{\bar{Y}}_i^{\star EBLUP}-\tilde{\bar{Y}}_i^{\star BLUP})^2\right).
	\end{equation}
	Under normality of the random area and individual effects, the cross-product term missing from \eqref{MSE_eblup} is zero provided the vector of variance components is translation invariant. Furthermore, the last component on the right hand side of \eqref{MSE_eblup} is generally intractable. We therefore approximate this term using a first-order Taylor expansion. Following \citet{Hen75}, we first note that
	the $MSE$ of $\tilde{\bar{Y}}_i^{\star BLUP}$ is 
	\begin{equation}\label{MSE_BLUP}
	MSE_{\bA,M}(\tilde{\bar{Y}}_i^{\star BLUP})=g_{1i}^*(\bdelta)+g_{2i}^*(\bdelta),
	\end{equation}
	where
	\begin{eqnarray}
	g_{1i}^*(\bdelta)=\bar\bz_i^{\prime}\Big[\bSigma_{\bu_i}-\bSigma_{\bu_i}\sum_{q \in i}\Big\{\bZ_{siq}^{\prime}\bSigma_{siq}^{-1} \bZ_{siq}\Big\}\bSigma_{\bu_i}\Big]\bar\bz_i;\\
	g_{2i}^*(\bdelta)=\bC_i \left(\sum_{i=1}^{D} \sum_{q \in i} \bX_{siq}^{\star\prime} {\bSigma}_{siq}^{-1} \bX_{siq}^{\star}\right)^{-1}  \bC_i^{\prime}.
	\end{eqnarray}
	Here $\bC_i=\bar{\bx}_i^{\prime}- \bar\bz_i^{\prime}\bSigma_{\bu_i}\sum_{q \in i}\Big\{\bZ_{siq}^{\prime}\bSigma_{siq}^{-1}\bX_{siq}^{\star}\Big\}$, $\bar\bx_i=N_{i}^{-1}\sum_{q \in i} N_{iq}\bar\bx_{iq}$ and $\bar\bz_{i}$ denotes the vector of average values of $\bZ_i$ for the $N_i$ units in the small area $i$. Next, a first order approximation to the second term on the right hand side of \eqref{MSE_eblup} follows from:
	
	\begin{prop}\label{prop2}
		We make the same assumptions as in Proposition \ref{prop1}. In addition we assume that regularity conditions 1-6 as specified in Appendix B hold, and that the random errors in \eqref{EBLUP_model} are normally distributed. Let $\hat{\bdelta}$ be the pseudo-REML estimator of $\bdelta$ and put
		\begin{equation}\label{g3}
		g_{3i}^*(\bdelta)=tr\Big[(\nabla \bb_i ^{\prime})\Big \{\sum_{q \in i}\bSigma_{siq}\Big\} (\nabla \bb_i ^{\prime})^{\prime}  E(\hat\bdelta-\bdelta)(\hat\bdelta-\bdelta)^{\prime} \Big],
		\end{equation}
		where $\nabla \bb_i ^{\prime}=col_{1\leqslant l \leqslant K}(\partial \bb_i ^{\prime}/\partial \delta_l)$ and $\bb_i ^{\prime}=\sum_{q \in i} \bSigma_{\bu_i} \bZ_{siq}^{\prime}\bSigma_{siq}^{-1}$. Then 
		\begin{equation}\label{g3f}
		E_{\bA, M}\left((\hat{\bar{Y}}_i^{\star EBLUP}-\tilde{\bar{Y}}_i^{\star BLUP})^2\right)=g_{3i}^*(\bdelta)+o(D^{-1}).
		\end{equation}
	\end{prop}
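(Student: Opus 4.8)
The plan is to follow the Kackar--Harville/Prasad--Rao route, adapted to the joint linkage-error and mixed-model expectation $E_{\bA,M}$. Writing the $^\star$BLUP of the area mean in (\ref{BLUP_star}) as a smooth function $h_i(\bdelta)$ of the variance components through $\bSigma_{siq}$, the $^\star$EBLUP is simply $h_i(\hat\bdelta)$, so that $\hat{\bar{Y}}_i^{\star EBLUP}-\tilde{\bar{Y}}_i^{\star BLUP}=h_i(\hat\bdelta)-h_i(\bdelta)$. First I would expand this difference to first order about the true $\bdelta$,
\[
h_i(\hat\bdelta)-h_i(\bdelta)=\sum_{l=1}^{K}\frac{\partial h_i}{\partial\delta_l}(\bdelta)\,(\hat\delta_l-\delta_l)+R_i,
\]
where $R_i$ collects the quadratic remainder, and then show that $E_{\bA,M}(R_i^2)$ and the cross term between the linear part and $R_i$ are both $o(D^{-1})$ under RCs 1--6. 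This is exactly the strategy underlying the decomposition (\ref{MSE_eblup}) and the use of \citet{Hen75} and \citet{Pra90} that the paper invokes.

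Next I would locate the dominant gradient. The predictor depends on $\bdelta$ both through $\tilde\bbeta^\star$ and through the predicted area effect $\tilde\bu_i^\star$ in (\ref{rablup}). Because $\tilde\bbeta^\star$ is $\sqrt{n}$-consistent and its sensitivity to $\bdelta$ contributes only at order $o(D^{-1})$ to the present term (the standard argument that variance-component estimation does not inflate the $g_{2i}^\star$ contribution at leading order), the leading piece comes from the random-effect part, whose weight matrix is precisely $\bb_i^\prime=\sum_{q\in i}\bSigma_{\bu_i}\bZ_{siq}^\prime\bSigma_{siq}^{-1}$. Treating the residual vector $\br_{si}=\by^\star_{si}-\bX^\star_{si}\tilde\bbeta^\star$ as fixed to leading order, the linear term in the expansion becomes $(\nabla\bb_i^\prime)(\hat\bdelta-\bdelta)$ acting on $\br_{si}$, which is the quantity appearing in the definition (\ref{g3}).

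Then I would square the leading linear term, take the joint $E_{\bA,M}$ expectation, and invoke the key orthogonality property. Under normality of $\bu$ and $\be$ in (\ref{EBLUP_model}) and translation invariance of the pseudo-REML estimator of \citet{Samartchambers2014}, the estimator $\hat\bdelta$ is, to leading order, independent of the $^\star$BLUP residual vector $\br_{si}$, since it is a function of error contrasts that are orthogonal to the predictor; consequently the relevant fourth moment factorises and the cross terms carrying a single factor of $(\hat\bdelta-\bdelta)$ drop out (the Kackar--Harville cancellation). The expectation then collapses to the form $tr[(\nabla\bb_i^\prime)\,V(\br_{si})\,(\nabla\bb_i^\prime)^\prime\,E(\hat\bdelta-\bdelta)(\hat\bdelta-\bdelta)^\prime]$. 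Substituting the leading-order covariance of $\br_{si}$, which by (\ref{jointvar}) is captured by the block structure $\sum_{q\in i}\bSigma_{siq}$, reproduces $g_{3i}^*(\bdelta)$ exactly, and (\ref{g3f}) follows once the neglected pieces are shown to be $o(D^{-1})$.

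The main obstacle will be the rigorous justification of this orthogonality step and the order bookkeeping in the pseudo-REML/ELE framework rather than in the classical REML one. Specifically, I would need to verify that the pseudo-REML estimating functions, which by construction depend only on the conditional moments (\ref{jointexp}) and (\ref{jointvar}), still deliver a translation-invariant estimator with $E(\hat\bdelta-\bdelta)(\hat\bdelta-\bdelta)^\prime=O(D^{-1})$, so that the Kackar--Harville factorisation goes through under the joint $E_{\bA,M}$ expectation and the retained trace is genuinely $O(D^{-1})$, while every neglected piece --- the $\tilde\bbeta^\star$ sensitivity to $\bdelta$, the Taylor remainder $R_i$, and the correction between $V(\br_{si})$ and $\sum_{q\in i}\bSigma_{siq}$ --- is $o(D^{-1})$. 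The regularity conditions 1--6 of Appendix B are precisely what I would use to control these remainders and to guarantee the moment and smoothness bounds the expansion requires.
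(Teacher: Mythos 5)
Your proposal follows essentially the same route as the paper: the paper's entire proof of Proposition \ref{prop2} is a single sentence citing \citet{Pra90}, i.e.\ precisely the Kackar--Harville/Prasad--Rao linearization you spell out (first-order expansion of the predictor in $\hat\bdelta$, cancellation of cross terms via normality and translation invariance of the variance-component estimator, and collapse of the expectation to the trace formula with residual covariance $\sum_{q \in i}\bSigma_{siq}$). Your write-up is correct and in fact more explicit than the paper about the delicate points it leaves implicit --- the translation invariance of the pseudo-REML estimator of \citet{Samartchambers2014}, the $O(D^{-1})$ order of $E(\hat\bdelta-\bdelta)(\hat\bdelta-\bdelta)^{\prime}$, and the validity of the orthogonality step under the joint $E_{\bA,M}$ expectation rather than the classical Gaussian REML setting.
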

	\begin{proof}
		The expansion \eqref{g3f} is modified version of the one obtained by \cite{Pra90}.
	\end{proof}
	
	
	Following the same line of reasoning as in \citet{Pra90}, an approximately unbiased estimator of the MSE of $\hat{\bar{Y}}_i^{\star EBLUP}$ under \eqref{EBLUP_model} is then
	\begin{equation}\label{mse_eblup}
	\widehat{MSE}_{\bA,M}(\hat{\bar{Y}}_i^{\star EBLUP})=(1-\frac{n_i}{N_i})^2\{g_{1i}^*(\hat{\bdelta})+g_{2i}^*(\hat{\bdelta})+2g_{3i}^*(\hat{\bdelta})\},
	\end{equation}
	{\color{black} An estimator of the covariance matrix of the variance component estimator $\hat{\bdelta}$ is necessary in order to compute (\ref{g3}). This can be obtained as the inverse of the expected information matrix developed in \citet{Samartchambers2014}. If a pseudo-ML estimator is used instead, a bias correction to (\ref{mse_eblup}) is necessary, along the same lines as in \citet[][Section 6.2.6]{Rao03}.}
	Finally, we note that if the random area effects are estimated by \eqref{rablup_mod} then the three components of the estimated MSE become:
	\begin{equation}
	\label{g1mod} g_{1i}^{**}(\hat{\bdelta})= \bar\bz_i^{\prime}\Big[\hat\bSigma_{\bu_i}+\hat\bSigma_{\bu_i}\sum_{q \in i}\lambda_q^2\Big\{\bZ_{siq}^{\prime}\hat\bSigma_{siq}^{-1} \bZ_{siq}\Big\}\hat\bSigma_{\bu_i}-\hat\bSigma_{\bu_i}\sum_{q \in i}\lambda_q\Big\{\bZ_{siq}^{\prime}\hat\bSigma_{siq}^{-1} \bZ_{siq}\Big\}\hat\bSigma_{\bu_i}\Big]\bar\bz_i;
	\end{equation}
	\begin{equation}
	\label{g2mod} g_{2i}^{**}(\hat{\bdelta})=\hat\bC_i^{\star \star\prime} \left(\sum_{i=1}^{D} \sum_{q \in i} \bX_{siq}^{\star\prime} \hat{\bSigma}_{siq}^{-1} \bX_{siq}^{\star}\right)^{-1} \hat\bC_i^{\star \star};
	\end{equation}
	\begin{equation}
	\label{g3mod} g_{3i}^{**}(\hat{\bdelta})  = tr \left[ \nabla \bb_i ^{\star \star\prime}  \sum_{q \in i}\hat \bSigma_{siq} (\nabla \bb_i ^{\star \star\prime})^{\prime}  \hat{V}(\hat\bdelta)\right].
	\end{equation}
	Here $\hat\bC_i^{\star \star}=\bar{\bx}_i^{\prime}- \bar\bz_i^{\prime}\hat\bSigma_{\bu_i}\sum_{q \in i}\lambda_q \Big\{\bZ_{siq}^{\prime}\hat\bSigma_{siq}^{-1}\bX_{siq}^{\star}\Big\}$, $\nabla \bb_i ^{\star \star\prime}=col_{1\leqslant l \leqslant K}(\partial \bb_i ^{\star \star\prime}/\partial \delta_l)$ and\\ $\bb_i ^{\star \star\prime}=\sum_{q \in i}\lambda_q \bSigma_{\bu_i} \bZ_{siq}^{\prime}\bSigma_{siq}^{-1}$.
	
	%
	%
	\section{Robust linear mixed models for small area estimation with linked data}\label{sec:REBLUP}
	
	Outliers are a problem for any model-based survey estimation method, but particularly so for small area estimates. \citet{Sin09} proposed an estimator of a small area mean based on outlier robust estimation of the linear mixed model parameters. This robust-projective approach \citep{Cha14} uses plug-in robust prediction, i.e. the authors substitute outlier robust parameter estimates for the optimal, but outlier-sensitive, parameter estimates used in the EBLUP. In particular, they estimate fixed effects and variance components using a modified version of the estimating equations corresponding to the Robust ML Proposal II of \citet{Ric95} and compute outlier robust predictions for the random area effects using the robust estimating equations suggested by \citet{Fel86}. The solutions to these estimating equations depend on specification of a bounded influence function $\psi$, which we take to be the \citet{Hub81} influence function, defined as $\psi(u)=u\min(1,c/\vert c \vert)$ where $c$ is a tuning constant. Quantities that depend on this influence function (and hence on choice of the tuning constant) will be denoted by a superscript of $\psi$ below. The \citet{Sin09} robust version of the EBLUP, denoted REBLUP, of the small area mean $\bar{Y}_i$  is then:
	{\color{black} 
		\begin{equation}\label{REBLUP}
		\hat{\bar{Y}}_i^{\psi REBLUP}=N_i^{-1}\left\{n_i \bar{y}_{si}+(N_i-n_i)\left[ \bar{\bx}_{ri}^{\prime} \hat{\bbeta}^{\psi}+\bar\bz_{ri}^{\prime}\hat{\bu}^{\psi}  \right]\right\},
		\end{equation}
	}
	where $\hat{\bar{y}}_i= \sum_{j \in s_i}y_{ij}/n_i$, $\bar{\bx}_{ri}$ denotes the vector of average values of $\bX_{i}$ for the $N_{i}-n_{i}$ non-sampled units in the small area $i$, $\hat{\bbeta}^{\psi}$ is the robust estimated vector of regression coefficients and $\hat{\bu}^{\psi}$ is the the vector of robust predicted values of the area effects.

	Given linked data, we modify the estimating equations of both the Robust ML Proposal II of \citet{Ric95} and of  \citet{Fel86} to account for linkage errors, making the same assumptions (ELE errors, one to one, complete and non-informative linkage) as in Section \ref{sec:ass}. The modified \citet{Fel86} estimating equations assume the variance components $\bdelta$ are known, and define robust estimates $\bbeta^{\star}$ and $\bu^{\star}$ of the fixed effects and the area random effects respectively. They are:
	\begin{equation}
	\label{rob1_star} \sum_{i=1}^D \sum_{q \in i}\bX_{siq}^{\star \prime}\bSigma_{siq}^{-1}\bU_{siq}^{1/2}\psi\{\br_{siq}^{\star}\}= \bzero,
	\end{equation}
	and
	\begin{equation}
	\label{rra} \sum_{i=1}^D \sum_{q \in i} \left \{\bZ_{siq}^\prime\bSigma_{seAiq}^{-1/2} \psi\left\{\bSigma_{seAiq}^{-1/2} (\by_{siq}^{\star}-\bX_{siq}^{\star}  \bbeta^{\star}-\bZ_{siq} \bu_{i}) \right\}- \bSigma_{\bu_i}^{ -1/2} \psi\{ \bSigma_{\bu_i}^{-1/2}\bu_i  \} \right\} =\bzero.
	\end{equation}
	Here $\br_{siq}^{\star}=\bU_{siq}^{-1/2}(\by_{siq}^{\star}-\bX_{siq}^{\star}\bbeta^{\star})$, $\bSigma_{seAiq}=\bSigma_{seiq}+\bV_{siq}$, $\bSigma_{siq}=\bZ_{siq}\bSigma_{\bu_i}\bZ_{siq}^\prime+\bSigma_{seAiq}$ where $\bV_{siq}$ is defined in \eqref{vsiq} and $\bU_{siq}$ is a diagonal matrix with the same diagonal entries as $\bSigma_{siq}$. Robust estimates of the variance components $\bdelta$ are obtained as solutions to the modified version of the Robust ML Proposal II estimating equations of \citet{Ric95}, and are then substituted in (\ref{rob1_star}) and (\ref{rra}). Put $\bK_{siq}=E\{\psi^2(R )\}\bI_{n_{iq}}$, where $R$ is a standard normal random variable. These estimating equations are then
	\begin{equation}
	\label{rob2_star} \sum_{i=1}^D\sum_{q \in i} \left\{\ppsi^{\prime}\{\br_{siq}^{\star}\}\bU_{siq}^{1/2} \bSigma_{siq}^{-1} \frac{\partial \bSigma_{siq}}{\partial \delta_l} \bSigma_{siq}^{\-1}  \bU_{siq}^{1/2} \ppsi\{\br_{siq}^{\star}\} -  tr\left( \bK_{siq} \bSigma_{siq}^{-1}\frac{\partial \bSigma_{siq}}{\partial \delta_l}  \right) \right\}=\bzero,
	\end{equation}
	and the corresponding linkage error-adjusted version of the REBLUP of \citet{Sin09} is 
	\begin{equation}\label{REBLUP_star}
	\hat{\bar{Y}}_i^{\psi \star REBLUP}=N_i^{-1}\left\{n_i\bar{y}_{si}^{\star}+(N_i-n_i)\left[ \bar{\bx}_{ri}^{\star\prime} \hat{\bbeta}^{\psi \star}+\bar\bz_{ri}^{\prime}\hat{\bu}^{\psi \star}  \right]\right\}.
	\end{equation}
	Here $\hat{\bbeta}^{\psi \star}$ and $\hat{\bu}^{\psi \star}$ depend on the influence function $\psi$, and are the solutions to the robust estimating equations (\ref{rob1_star}) and (\ref{rra}) respectively, using substituted values of $\bdelta$ obtained by solving (\ref{rob2_star}). Note that the value of (\ref{REBLUP_star}) when the true value of $\bdelta$ is used in (\ref{rob1_star}) and (\ref{rra}) is referred to below as the RBLUP, denoted $\tilde{\bar{Y}}_i^{\psi \star RBLUP}$, with the corresponding solutions to (\ref{rob1_star}) and (\ref{rra}) then denoted by $\tilde{\bbeta}^{\psi \star}$ and $\tilde{\bu}^{\psi \star}$ respectively. We refer to these linkage-error adjusted versions of RBLUP and REBLUP as $^\star$RBLUP and $^\star$REBLUP respectively below.

	\subsection{MSE estimation for the $^\star$REBLUP}\label{mse_REBLUP}
	We develop an analytic estimator for the MSE of (\ref{REBLUP_star}) under a working mixed model that conditions on the realized values of the area effects, i.e. the proposed MSE estimator is an estimator of the conditional MSE of $\hat{\bar{Y}}_i^{\psi \star REBLUP}$. It is based on the assumption that a consistent estimator of the MSE of a linear approximation to a non-linear small area estimator can be used as an estimator of the MSE of that small area estimator. See \citet{booth98} and \citet{Cha14}. Such linearization-based MSE estimators are generally not consistent, and can be biased low, see \citet{Har92}. However, in small sample problems this is not an issue since it is the variability, rather than the bias, of the MSE estimator that is of concern. The development below omits some technical details, which are available from the authors on request. 
	
	\begin{prop}
		Put $\tilde{\btheta}^{\psi \star}=(\tilde{\bbeta}^{\psi \star \prime},\tilde{\bu}^{\psi \star \prime})^\prime$ and assume that this random variable converges in probability to $\btheta_0^{\psi \star}=({\bbeta}_0^{\psi \star \prime}, {\bu}_0^{\psi \star \prime})^\prime$. Also let $V_{\bA, M | \bu}$ denote variance with respect to both the linkage error model and the linear mixed model for $\by$ in terms of $\bX$ given the realized values of the area effects. Suppose the same assumptions are made as in Proposition \ref{prop1}, and in addition, regularity conditions 1-5 and 7-9 in Appendix B apply and $\psi$ corresponds to the Huber influence function. Then the conditional prediction variance of the {\color{black}$^\star$RBLUP} can be expressed as
		{\color{black} 
			\begin{equation}\label{MSE_RBLUP}
			V_{\bA, M | \bu}(\tilde{\bar{Y}}_i^{\psi \star RBLUP}-\bar{Y}_i)=\left(1-\frac{n_i}{N_i}\right)^2 \left\{\begin{array}{c|c}(\bar{\bx}_{ri}^{\star\prime}~~ & ~~\bar{\bz}_{ri}^\prime)\end{array}V_{\bA ,M| \bu}(\tilde{\btheta}^{\psi \star})\begin{array}{c|c}(\bar{\bx}_{ri}^{\star\prime}~~ & ~~\bar{\bz}_{ri}^\prime)\end{array}^{\prime}+V_{\bA, M| \bu}(\bar\be^{\psi \star}_{ri})\right\}+o(D^{-1})
			\end{equation}
			where 
			$\bar\be^{\psi \star}_{ri}=(N_i-n_i)^{-1}\sum_{j \in r_i}(y_{ij}^{\star}-\bx_{ij}^{\star\prime}{\bbeta}_0^{\psi \star }-\bz_{ij}\bu_0^{\psi \star})$.}
	\end{prop}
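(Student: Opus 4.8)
The plan is to write the prediction error of the $^\star$RBLUP as a finite-population-corrected non-sample prediction error, to linearize the robust estimator about its probability limit $\btheta_0^{\psi\star}$, and then to take the conditional variance, retaining only terms that are not $o(D^{-1})$. First I would use the observation, already made after \eqref{BLUP_star}, that the within-area permutation structure of $\bA_{iq}$ preserves area totals, so that $\bar{Y}_i=\bar{Y}_i^{\star}=N_i^{-1}\{n_i\bar{y}_{si}^{\star}+(N_i-n_i)\bar{y}_{ri}^{\star}\}$, where $\bar{y}_{ri}^{\star}$ is the mean of the (unobserved) linked values over the non-sample set $r_i$. Subtracting this from \eqref{REBLUP_star} evaluated at the true $\bdelta$, the sampled contributions $n_i\bar{y}_{si}^{\star}$ cancel and one obtains the exact identity
\[
\tilde{\bar{Y}}_i^{\psi \star RBLUP}-\bar{Y}_i=\left(1-\frac{n_i}{N_i}\right)\left[\bar{\bx}_{ri}^{\star\prime}\tilde{\bbeta}^{\psi\star}+\bar{\bz}_{ri}^\prime\tilde{\bu}^{\psi\star}-\bar{y}_{ri}^{\star}\right].
\]

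Next I would add and subtract $\btheta_0^{\psi\star}$ inside the bracket to separate an estimation-error part from a sampling-error part,
\[
\bar{\bx}_{ri}^{\star\prime}\tilde{\bbeta}^{\psi\star}+\bar{\bz}_{ri}^\prime\tilde{\bu}^{\psi\star}-\bar{y}_{ri}^{\star}=(\bar{\bx}_{ri}^{\star\prime}~~\bar{\bz}_{ri}^\prime)(\tilde{\btheta}^{\psi\star}-\btheta_0^{\psi\star})-\bar\be^{\psi\star}_{ri},
\]
using that $\bar{\bx}_{ri}^{\star\prime}\bbeta_0^{\psi\star}+\bar{\bz}_{ri}^\prime\bu_0^{\psi\star}-\bar{y}_{ri}^{\star}=-\bar\be^{\psi\star}_{ri}$ by the definition of $\bar\be^{\psi\star}_{ri}$. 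The factor $\tilde{\btheta}^{\psi\star}-\btheta_0^{\psi\star}$ is then linearized by a first-order Taylor expansion of the stacked robust estimating functions \eqref{rob1_star}--\eqref{rra} about $\btheta_0^{\psi\star}$; because $\psi$ is the Huber function (bounded and Lipschitz) and the expected Jacobian of these equations is nonsingular under regularity conditions 7--9, this expansion holds with a remainder whose contribution to the variance is $o(D^{-1})$, and it represents $\tilde{\btheta}^{\psi\star}-\btheta_0^{\psi\star}$ as a linear functional of the sample scores $\psi(\br_{siq}^{\star})$.

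Taking $V_{\bA,M|\bu}$ of the linearized prediction error produces the two quadratic forms displayed in \eqref{MSE_RBLUP}, namely $(\bar{\bx}_{ri}^{\star\prime}~~\bar{\bz}_{ri}^\prime)V_{\bA,M|\bu}(\tilde{\btheta}^{\psi\star})(\bar{\bx}_{ri}^{\star\prime}~~\bar{\bz}_{ri}^\prime)^{\prime}$ and $V_{\bA,M|\bu}(\bar\be^{\psi\star}_{ri})$, together with the linearization remainder and a cross-covariance term $-2(1-n_i/N_i)^2\,\mathrm{Cov}_{\bA,M|\bu}\big((\bar{\bx}_{ri}^{\star\prime}~~\bar{\bz}_{ri}^\prime)\tilde{\btheta}^{\psi\star},\,\bar\be^{\psi\star}_{ri}\big)$. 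The main obstacle, and the step I would treat most carefully, is showing this cross-covariance is $o(D^{-1})$. The linearized estimation error is a sample-based quantity, a weighted sum of the scores $\psi(\br_{siq}^{\star})$ over sampled units, whereas $\bar\be^{\psi\star}_{ri}$ averages residuals over the disjoint non-sample set $r_i$. Conditional on the realized area effects $\bu$, the individual errors $\be$ are independent across the disjoint index sets $s$ and $r_i$ and so contribute nothing to the covariance; the only coupling runs through the shared area-by-block permutation $\bA_{iq}$, which vanishes when $\lambda_q=1$ and under the ELE model \eqref{ele1}--\eqref{ele2} is of lower order, controlled by regularity conditions 1--5. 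I would then verify that, combined with the fact that the $\tilde{\bbeta}^{\psi\star}$ component aggregates over all $D$ areas, this covariance and the linearization remainder are together $o(D^{-1})$, which yields \eqref{MSE_RBLUP}.
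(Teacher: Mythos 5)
Your proposal is correct and follows essentially the same route as the paper: the paper's proof of this proposition is a one-line appeal to the prediction-variance linearization of \citet{Cha14}, and your argument---exact prediction-error decomposition using the fact that the within-area permutation preserves area totals, first-order Taylor expansion of the stacked robust estimating equations \eqref{rob1_star}--\eqref{rra} about $\btheta_0^{\psi \star}$, and showing the sample/non-sample cross-covariance is negligible---is precisely that argument adapted to linked data. Your identification of the cross-covariance induced by the shared permutation $\bA_{iq}$ as the delicate modified step is consistent with the paper, which itself omits these technical details (stating they are available from the authors on request).
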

	\begin{proof}
		Formula \eqref{MSE_RBLUP} is a modified version of a similar prediction variance formula developed in \cite{Cha14}.
	\end{proof}
	
	
	A first-order approximation to the prediction variance (\ref{MSE_RBLUP}) is then:
	\begin{equation}\label{var_u_star}
	\hat{V}_{\bA, M|\bu}(\tilde{\bar{Y}}_i^{\psi \star RBLUP}-\bar{Y}_i)=h_{1i}(\tilde{\btheta}^{\psi \star})+h_{2i}(\tilde{\btheta}^{\psi \star}),
	\end{equation}
	where
	\begin{enumerate}
		{\color{black} 
			\item $h_{1i}(\tilde{\btheta}^{\psi \star})=\left(1-\frac{n_i}{N_i}\right)^2 \begin{array}{c|c}(\bar{\bx}_{ri}^{\star\prime}~~ & ~~\bar{\bz}_{ri}^\prime)\end{array}\hat{V}_{\bA, M| \bu}(\tilde{\btheta}^{\psi \star})\begin{array}{c|c}(\bar{\bx}_{ri}^{\star\prime}~~ &~~ \bar{\bz}_{ri}^\prime)\end{array}^{\prime}$. Here $\hat{V}_{\bA, M| \bu}(\tilde{\btheta}^{\psi \star})$ is the sandwich-type estimator of $V_{\bA M| \bu}(\tilde{\btheta}^{\psi \star})$ set out in Appendix C.
			\item $h_{2i}(\tilde{\btheta}^{\psi \star})= \left(1-\frac{n_i}{N_i}\right)^2 \hat{V}_{\bA, M| \bu}(\bar\be^{\psi \star}_{ri})$ where
			$$\hat{V}_{\bA, M|\bu}(\bar\be^{\psi \star}_{ri})=(N_i-n_i)^{-1}(n_i-1)^{-1}\sum_{l}\sum_{j \in s_l} (y_{lj}^{*}-\bx_{lj}^{\star\prime}\tilde\bbeta^{\psi \star}-\bz_{lj}^\prime \tilde\bu^{\psi \star})^2.$$}
		Note that $\hat{V}_{\bA, M|\bu}(\bar\be^{\psi \star}_{ri})$ above pools data from the entire sample. This leads to more stable MSE estimates when area sample sizes are very small.
	\end{enumerate}
	The estimator of the MSE of the {\color{black}$^\star$RBLUP} is obtained by adding an estimator of the squared conditional bias to \eqref{var_u_star}: 
	\begin{equation}\label{mse_u_star}
	\widehat{MSE}_{\bA ,M |\bu}(\tilde{\bar{Y}}_i^{\psi \star RBLUP})=h_{1i}(\tilde{\btheta}^{\psi \star})+h_{2i}(\tilde{\btheta}^{\psi \star})+\hat B_{\bA, M| \bu}^{2}(\tilde{\bar{Y}}_i^{\psi \star RBLUP}),
	\end{equation}
	where
	{\color{black}
		\begin{equation}\label{bias_u_star}
		\hat B_{\bA, M| \bu}(\tilde{\bar{Y}}_i^{\psi \star RBLUP})=\sum_{j \in s} w_{ij}^{\psi \star RBLUP}\tilde\mu_{ij}^{\star}-N_{i}^{-1}\sum_{j \in U_i}\tilde\mu_{ij}^{\star}.
		\end{equation}
		Here $\tilde\mu_{ij}^{\star}$ is an unbiased linear estimator of the conditional expected value $\mu_{ij}^{\star}=E_{\bA, M}(y_{ij}^{\star}|\bx_{ij}, \bu^{\psi\star})$ and $w_{ij}^{\psi \star RBLUP}$} is the weight for unit $j$ in area $i$ based on the pseudolinearisation approach to MSE estimation that was described by \citet{Cha11a} and extended to RBLUP by \citet{Cha14}. These weights $w_{ij}^{\psi \star RBLUP}$ can be obtained in a straightforward manner in the case of linkage data following \citet{Cha11a}.
	
	Finally, we consider MSE estimation for the {\color{black}$^\star$REBLUP} \eqref{REBLUP_star}, noting that an estimator of its conditional MSE can based on a decomposition similar to that used in \citet{Pra90}:
	\begin{equation}\label{MSE_rob}
	MSE_{\bA,M|\bu}(\hat{\bar{Y}}_i^{\psi \star REBLUP})=MSE_{\bA,M|\bu}(\tilde{\bar{Y}}_i^{\psi \star RBLUP})+E_{\bA,M|\bu}\left((\hat{\bar{Y}}_i^{\psi \star REBLUP}-\tilde{\bar{Y}}_i^{\psi \star RBLUP})^2\right)+O(D^{-1}).
	\end{equation}
	An approximation to the second term on the right-hand side of equation \eqref{MSE_rob} can be obtained as follows:
	
	\begin{prop}
		We make the same assumptions as in Proposition \ref{prop1}. In addition, we assume that the regularity conditions 1-5 and 7-9 in Appendix B hold. Let $\hat{\bdelta}^{\psi \star}$ be the vector of estimated variance components obtained by solving the robust estimating equations (\ref{rob1_star}) - (\ref{rob2_star}). Then 
		\begin{equation}\label{h3}
		E_{\bA,M|\bu}\left((\hat{\bar{Y}}_i^{\psi \star REBLUP}-\tilde{\bar{Y}}_i^{\psi \star RBLUP})^2\right)=h_{3i}(\hat{\bdelta}^{\psi \star})+O(D^{-1}),
		\end{equation}
		where
		\begin{equation}
		\label{h3i}
		h_{3i}(\hat{\bdelta}^{\psi \star})=\left(N_i^{-1}\sum_{j \in r_i}\bz_{ij}^{\prime}  \right)\sum_{q \in i}\bOmega_{siq} V_{\bA, M| \bu}(\hat{\bdelta}^{\psi \star})\left(N_i^{-1}\sum_{j \in r_i}\bz_{ij}^{\prime}  \right)^{\prime}.
		\end{equation}
		Here
		$$\bOmega_{siq}=\sum_{k=1}^{K}\sum_{g=1}^{K}\left\{ (\partial_{\delta_k} \bB_{siq}) \left[ \sum_j \sum_l \{(\bz_{ij}^{\prime}\bu_0^{\psi \star})(\bz_{il}^{\prime}\bu_0^{\psi \star}) \} +\bSigma_{seAiq}\right]  (\partial_{\delta_g} \bB_{siq})^{\prime}\right\},$$ 
		
		\begin{eqnarray}
		\bB_{siq} & = & \left(\bZ_{siq}^{\prime}\bSigma_{seAiq}^{-1/2}\bW_{2siq}\bSigma_{seAiq}^{-1/2}\bZ_{siq}+\bSigma_{\bu_i}^{-1/2}\bW_{3siq}\bSigma_{\bu_i}^{-1/2}\right)^{-1} \\
		\label{BSIQ}& & \left(\bZ_{siq}^{\prime}\bSigma_{seAiq}^{-1/2}\bW_{2siq}\bSigma_{seAiq}^{-1/2}\right),
		\end{eqnarray}
		
		$$\bW_{2siq}= \psi\left\{\bSigma_{seAiq}^{-1/2} (\by_{siq}^{\star}-\bX_{siq}^{\star} \tilde \bbeta^{\psi \star } -\bZ_{siq} \tilde \bu_{i}^{\psi \star}) \right\} \left\{\bSigma_{seAiq}^{-1/2} (\by_{siq}^{\star}-\bX_{siq}^{\star}  \tilde\bbeta^{\psi \star} -\bZ_{siq} \tilde\bu_{i}^{\psi \star}) \right\}^{-1}$$ is a $n_{iq} \times n_{iq}$ diagonal matrix of weights for the individual effects in area $i$, and 
		$$\bW_{3siq}= \psi\{ \bSigma_{\bu_i}^{-1/2}\tilde\bu_i^{\psi \star}  \}  \left\{\bSigma_{\bu_i}^{-1/2}\tilde\bu_i^{\psi \star}\right\}^{-1}   $$
		is a $m \times m$ diagonal matrix of weights for the area effect associated with area $i$.
	\end{prop}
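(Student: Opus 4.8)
The plan is to mirror the Prasad--Rao style argument used for the linear $^\star$EBLUP in Proposition \ref{prop2}, but now in the robust, conditional-on-$\bu$ setting. The decomposition \eqref{MSE_rob} has already isolated the quantity to be approximated, and the only source of discrepancy between $\hat{\bar{Y}}_i^{\psi \star REBLUP}$ and $\tilde{\bar{Y}}_i^{\psi \star RBLUP}$ is the substitution of the estimated variance components $\hat{\bdelta}^{\psi \star}$, obtained from \eqref{rob2_star}, for the true $\bdelta$ in the robust estimating equations \eqref{rob1_star}--\eqref{rra}. Since the sampled contribution $n_i\bar{y}_{si}^{\star}$ is common to both predictors, I would first write the difference as $(N_i-n_i)N_i^{-1}\bar{\bz}_{ri}^{\prime}(\hat{\bu}_i^{\psi \star}-\tilde{\bu}_i^{\psi \star})$ plus a term carrying $\hat{\bbeta}^{\psi \star}-\tilde{\bbeta}^{\psi \star}$, and then argue, exactly as in the Kackar--Harville reduction underlying \eqref{g3}, that the fixed-effects contribution is of smaller order $O(D^{-1})$ under the regularity conditions. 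This is why only the $N_i^{-1}\sum_{j\in r_i}\bz_{ij}^{\prime}$ factors survive in \eqref{h3i}.

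The central step is to linearise the robust area-effect equation \eqref{rra} in $\bdelta$ about its true value. Treating $\hat{\bu}_i^{\psi \star}$ as the implicit solution of \eqref{rra} and applying the implicit function theorem, the Jacobian of \eqref{rra} with respect to $\bu_i$ produces precisely the matrix inverse in \eqref{BSIQ}, with the diagonal weight matrices $\bW_{2siq}$ and $\bW_{3siq}$ arising from the local linearisation of $\psi$ through the ratios $\psi(\cdot)/(\cdot)$. This identifies $\bB_{siq}$ as the robust analogue of the BLUP coefficient $\bSigma_{\bu_i}\bZ_{siq}^{\prime}\bSigma_{siq}^{-1}$ appearing in $\bb_i^{\prime}$ in Proposition \ref{prop2}. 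Differentiating $\bB_{siq}$ with respect to each $\delta_k$ then gives, to first order,
\begin{equation*}
\hat{\bu}_i^{\psi \star}-\tilde{\bu}_i^{\psi \star}\approx\sum_{q\in i}\sum_{k=1}^{K}(\partial_{\delta_k}\bB_{siq})\,(\by_{siq}^{\star}-\bX_{siq}^{\star}\bbeta_0^{\psi \star}-\bZ_{siq}\bu_0^{\psi \star})\,(\hat{\delta}_k^{\psi \star}-\delta_k),
\end{equation*}
the robust counterpart of the expansion $\hat{\bu}_i-\tilde{\bu}_i\approx(\nabla\bb_i^{\prime})^{\prime}(\hat{\bdelta}-\bdelta)$ implicit in \eqref{g3}.

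It then remains to square this leading term, pre- and post-multiply by $N_i^{-1}\sum_{j\in r_i}\bz_{ij}^{\prime}$, and take the joint expectation $E_{\bA,M|\bu}$. I would argue that, to leading order, $\hat{\bdelta}^{\psi \star}-\bdelta$ is asymptotically uncorrelated with the robust residual vector, so that the expectation factorises: the expectation of the outer product of the residuals, taken with respect to the linkage-error model and the conditional linear mixed model, yields the bracketed second-moment matrix $\sum_j\sum_l(\bz_{ij}^{\prime}\bu_0^{\psi \star})(\bz_{il}^{\prime}\bu_0^{\psi \star})+\bSigma_{seAiq}$, while the expectation of $(\hat{\bdelta}^{\psi \star}-\bdelta)(\hat{\bdelta}^{\psi \star}-\bdelta)^{\prime}$ supplies $V_{\bA,M|\bu}(\hat{\bdelta}^{\psi \star})$. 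Assembling these factors reproduces $\bOmega_{siq}$ and hence $h_{3i}(\hat{\bdelta}^{\psi \star})$ in \eqref{h3i}, with all discarded contributions controlled at order $O(D^{-1})$ by the boundedness of $\psi$ and its derivatives together with the $D^{-1/2}$-consistency of $\hat{\bdelta}^{\psi \star}$ guaranteed by regularity conditions 7--9.

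The main obstacle I anticipate is the linearisation of the coupled system \eqref{rob1_star}--\eqref{rra}--\eqref{rob2_star}: because $\hat{\bbeta}^{\psi \star}$, $\hat{\bu}^{\psi \star}$ and $\hat{\bdelta}^{\psi \star}$ are solved jointly through nonlinear, $\psi$-dependent equations, one must show that the cross-sensitivities, in particular the dependence of $\hat{\bu}_i^{\psi \star}$ on $\hat{\bbeta}^{\psi \star}$ and the covariance between $\hat{\bdelta}^{\psi \star}-\bdelta$ and the residuals, enter only at $O(D^{-1})$, so that the clean single-equation expansion above is legitimate. Establishing this Kackar--Harville type orthogonality in the robust, linkage-error-inflated setting, where the residual covariance $\bSigma_{seAiq}$ already carries the linkage correction $\bV_{siq}$ from \eqref{vsiq}, is the delicate part; the remaining derivative computations and conditional moment evaluations are routine once this reduction is in place.
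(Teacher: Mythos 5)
Your proposal takes essentially the same route as the paper: the paper's own proof is a one-line appeal to a ``suitably modified version of a similar formula'' in \citet{Cha14}, and your sketch --- implicit-function linearisation of the robust area-effect equation (\ref{rra}) to produce $\bB_{siq}$ with the $\psi(\cdot)/(\cdot)$ weight matrices $\bW_{2siq}$ and $\bW_{3siq}$, a first-order expansion in $\hat{\bdelta}^{\psi\star}-\bdelta$, relegation of the fixed-effects contribution to lower order, and a Kackar--Harville orthogonality step to factorise the conditional expectation into the residual second-moment matrix and $V_{\bA,M|\bu}(\hat{\bdelta}^{\psi\star})$ --- is precisely the derivation that citation encodes, adapted to the linkage-error-inflated covariance $\bSigma_{seAiq}$.

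One slip needs fixing, because taken literally it breaks the match with (\ref{h3i}). Since the linearised solution of (\ref{rra}) is $\tilde{\bu}_i^{\psi\star}\approx\bB_{siq}\left(\by_{siq}^{\star}-\bX_{siq}^{\star}\bbeta^{\psi\star}\right)$, differentiating in $\delta_k$ attaches $\partial_{\delta_k}\bB_{siq}$ to the residual $\by_{siq}^{\star}-\bX_{siq}^{\star}\bbeta_0^{\psi\star}$; the term $\bZ_{siq}\bu_0^{\psi\star}$ must \emph{not} be subtracted, as you do in your displayed expansion. The residual you wrote has conditional (on $\bu$) second moment $\bSigma_{seAiq}$ only, so the block $\sum_j\sum_l(\bz_{ij}^{\prime}\bu_0^{\psi\star})(\bz_{il}^{\prime}\bu_0^{\psi\star})$ in $\bOmega_{siq}$ would never appear. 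It is exactly the nonzero conditional mean $\bZ_{siq}\bu_0^{\psi\star}$ of the correct residual that generates that block, which is the moment computation you then assert; with the residual corrected, your assembled expression reproduces $h_{3i}(\hat{\bdelta}^{\psi\star})$ and the $O(D^{-1})$ remainder exactly as claimed.
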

	\begin{proof}
		Formula \eqref{g3} takes into account expectation with respect to the linkage error model and the model that relates $\by$ and $\bX$, and is a suitably modified version of a similar formula in \cite{Cha14}.
	\end{proof}
	We define an estimator $\hat{V}_{\bA, M |\bu}(\hat\bdelta^{\psi \star})$ of the variance-covariance matrix of the estimated variance components in Appendix C, based on the approach taken by \citet{Sin09}. Let $\hat{h}_{1i}(\hat{\btheta}^{\psi \star})$ and $\hat{h}_{2i}(\hat{\btheta}^{\psi \star})$ denote the values of $h_{1i}(\tilde{\btheta}^{\psi \star})$ and $h_{2i}(\tilde{\btheta}^{\psi \star})$ respectively when all unknown parameters are replaced by robust estimates, and put $\hat{h}_{3i}(\hat{\bdelta}^{\psi \star})$ equal to the value of (\ref{h3i}) when the estimator $\hat{V}_{\bA, M |\bu}(\hat\bdelta^{\psi \star})$ is substituted. An estimator of the conditional MSE of the {\color{black}$^\star$REBLUP} for area $i$ is then:
	\begin{equation}\label{est_mse_u_star}
	\widehat{MSE}_{\bA ,M| \bu}(\hat{\bar{Y}}_i^{\psi \star REBLUP})=\hat{h}_{1i}(\hat{\btheta}^{\psi \star})+\hat{h}_{2i}(\hat{\btheta}^{\psi \star})+\hat{h}_{3i}(\hat{\btheta}^{\psi \star})+\hat B_{\bA, M| \bu}^{2}(\hat{\bar{Y}}_i^{\psi \star REBLUP}).
	\end{equation}

	%
	%

	\section{M-quantile models for small area estimation with linked data}\label{sec:MQ}
	
	M-quantile regression models were first suggested for small area estimation by \cite{Cha06}. See \cite{Bianchi18} for a recent review of subsequent applications and theoretical extensions. Here we briefly discuss basic ideas and develop appropriate notation. 
	
	
	
	\citet{Bre88} introduced M-quantile regression as a `quantile-like' generalization of regression based on a bounded influence function $\psi$ \citep{Hub81}, with associated loss function $\rho$ such that $\psi=d\rho(u)/du$. For a given $\tau \in (0,1)$, the  M-quantile of order $\tau$ of a random variable is defined as the value minimizing the expectation of the tilted loss function $\rho_\tau = \vert \tau - I(u<0) \vert \rho (u)$, where $\rho(u)$, $u \in \Re$, is continuously differentiable with $\rho(0)=0$. M-quantiles aim at combining the robustness properties of quantiles ($\rho(u)=|u|$) with the efficiency properties of expectiles ($\rho(u)=u^2$). By definition, any M-quantile must depend on specification of its influence function $\psi$, and so we will not explicitly refer to $\psi$ in our notation below for quantities that depend on the values of M-quantiles that are all defined using the same $\psi$.
	
	In the linear case, M-quantile regression leads to a family of hyperplanes indexed by a real number $\tau \in (0,1)$ representing the order of the M-quantile of interest, i.e.
	\begin{displaymath}
	MQ(\tau|\bx_{ij})=\bx_{ij}^{\prime}\bbeta_\tau.
	\end{displaymath}
	For specified $\tau$ and influence function $\psi$ (with $\psi_\tau=d\rho_\tau(u)/du$) an estimate $\hat{\bbeta}_{\tau }$ of the vector of regression parameters $\bbeta_\tau$ may be obtained as the solution to the normal equations,
	\begin{equation}
	\label{mqnormal}
	\sum_{i=1}^{D}\sum_{j=1}^{n_i}\psi_\tau \left(\frac{y_{ij}-\bx_{ij}^{\prime}\hat{\bbeta}_{\tau }}{\sigma_\tau}\right)\bx_{ij}=\bzero,
	\end{equation}
	where $\sigma_\tau$ is a scale parameter that characterizes the spread of the distribution of the residuals $y_{ij}-\bx_{ij}^{\prime}\bbeta_\tau$. Following standard practice in robust M-regression, this scale parameter can be estimated by $\hat{\sigma}_{\tau}=median\{|y_{ij}-\bx_{ij}^{\prime}\hat{\bbeta}_{\tau}|\}/0.6745$. When $\psi$ is a continuous function, \citet{Bre88} adapt the iteratively reweighted least squares (IRWLS) approach to linear M-regression and show that a linear M-quantile regression of specified order $\tau$ can be fitted by weighting positive residuals from the M-quantile line by $\tau$ and negative residuals by $1-\tau$. Note that in what follows we will assume that $\psi$ is the Huber influence function, with tuning constant $c>0$, and so $\psi$ is continuous.
	

	Following \cite{Cha06}, we characterize conditional variability given $\bx_{ij}$ across the population of interest by the M-quantile coefficients of the population units. For unit $j$ in area $i$ this coefficient is the value $\tau_{ij}$ such that $MQ(\tau_{ij}|\bx_{ij})=y_{ij}$. If a hierarchical structure does explain part of the variability in the population data, units within areas defined by this hierarchy are expected to have similar M-quantile coefficients. When the conditional M-quantiles are assumed to follow a linear model, with $\bbeta_\tau$ a sufficiently smooth function of $\tau$, \citet{Cha06} suggest a predictor of $\bar{Y}_i$ of the form
	{\color{black} 
		\begin{equation}\label{MQ_predictor}
		\hat{\bar{Y}}_i^{MQ}=N_{i}^{-1}\left\{n_i\bar{y}_{si}+(N_i-n_i)\bar{\bx}_{ri}^{\prime} \hat{\bbeta}_{\hat{\tau}_i}\right\},
		\end{equation}}
	where $\hat{\tau}_i=n_i^{-1}\sum_{j \in s_i}\hat \tau_{ij}$ is an estimate of the average value of the M-quantile coefficients $\tau_{ij}$ for population units in area $i$, and $\hat \tau_{ij}$ is defined by the estimating equation $y_{ij}=\bx_{ij}^\prime \hat \bbeta_{\hat{\tau}_{ij}}$.

	Naive use of M-quantile regression modelling when the data contain linkage errors leads to biased estimates of the true M-quantile fits. This is intuitively clear from the fact that the combined impact of natural variability as well as linkage error variability leads to conditional distributions at the different $\bx_{ij}$ that are not the ones of interest. It is also clear from a cursory inspection of (\ref{jointexp}) and (\ref{jointvar}). Following the approach of \citet{OSR2009}, we therefore modify the M-quantile normal equations (\ref{mqnormal}) to take account of the linkage error structure, using the notation introduced in Section \ref{sec:ass}. This leads to the modified M-quantile normal equations,

	\begin{equation}\label{est_mq}
	\sum_{i=1}^D\sum_{q \in i} \bX_{siq}^{\star\prime}\bUpsilon_{siq\tau}^{-1/2}\psi_\tau \left \{ \bUpsilon_{siq\tau}^{-1/2}(\by_{siq}^{\star}-\bX_{siq}^{\star\prime}\bbeta_{\tau}^{\star})  \right\}=\bzero,
	\end{equation}
	where $\bUpsilon_{siq\tau}=diag\left(\sigma_{\tau}^{\star 2}+(1-\lambda_q)(\lambda_q (f_{iqj\tau}-\bar{f}_{siq\tau})^2+\bar{f}^{(2)}_{siq\tau}  -\bar{f}_{siq\tau}^2)\right)$, $\bf_{siq\tau}=\{f_{iqj\tau}\}=\bx_{iqj}^{\prime}\bbeta_{\tau}^{\star}$, and $\bar{f}_{siq\tau}$, $\bar{f}_{siq\tau}^2$ denote the block $q$ averages of the components of $\bf_{siq\tau}$ and their squares respectively. Note that $\sigma_{\tau}^{\star}$ here is the scale coefficient associated with the skewed residuals from the M-quantile regression line of order $\tau$. Given $\bUpsilon_{siq\tau}$, the solution to \eqref{est_mq} can be obtained via IRWLS, and is of the form
	\begin{equation}\label{beta_star_mq}
	\tilde{\bbeta}_\tau^{\star}=\left(\sum_{i=1}^{D}\sum_{q \in i}\bX_{siq}^{\star\prime}\bUpsilon_{siq\tau}^{-1/2}\bW_{siq\tau}^{\star} \bUpsilon_{siq\tau}^{-1/2}\bX_{siq}^{\star}\right)^{-1}\left(\sum_{i=1}^D\sum_{q \in i}\bX_{siq}^{\star \prime}\bUpsilon_{siq\tau}^{-1/2}\bW_{s iq\tau}^{\star} \bUpsilon_{siq\tau}^{-1/2} \by_{siq}^{\star}\right),
	\end{equation}
	where $\bW_{siq\tau}^{\star}$ is a diagonal matrix of weights defined by component-wise division of the vector $\psi_\tau \left\{\bUpsilon_{siq\tau}^{-1/2} ( \by_{siq}^{\star}-\bX_{siq}^{\star}\tilde{\bbeta}_{\tau}^{\star}) \right\}$ by the vector $\bUpsilon_{siq\tau}^{-1/2} ( \by_{siq}^{\star}-\bX_{siq}^{\star}\tilde{\bbeta}_{\tau}^{\star})$. Similarly, given $\bbeta_{\tau}^{\star}$ and $\bUpsilon_{siq\tau}$, a robust estimator of $\sigma_{\tau}^{\star 2}$ is
	{\color{black}
		\begin{equation}\label{sigma_star_mq}
		\tilde{\sigma}_{\tau}^{\star 2}=\left(\sum_{i=1}^D\sum_{q \in i}tr(\bW_{si q\tau}^{\star})  \right)^{-1}\sum_{i=1}^{D}\sum_{q \in i}\left\{ (\by_{siq}^{\star}-\bx_{iqj}^{\star \prime}\tilde{\bbeta}_{\tau}^{\star})^\prime  \bW_{s iq\tau}^{\star} (\by_{siq}^{\star}-\bx_{iqj}^{\star \prime}\tilde{\bbeta}_{\tau}^{\star})\right\}.
		\end{equation}
	}
	The `empirical' versions of $\tilde{\bbeta}_\tau^{\star}$ and $\tilde{\sigma}_{\tau}^{\star 2}$, which we denote by $\hat{\bbeta}_\tau^{\star}$ and $\hat{\sigma}_{\tau}^{\star 2}$ respectively, are then defined by iterating between (\ref{beta_star_mq}) and (\ref{sigma_star_mq}).
	
	In order to use the M-quantile approach for small area estimation, it is first necessary to estimate the M-quantile coefficients defined by the correctly linked sample values $\by_{siq}$, i.e. the values $\hat{\tau}_{iqj}$ such that $y_{iqj}=\bx_{iqj}^\prime \hat{\bbeta}_{\hat{\tau}_{iqj}}$ for $j \in s_{iq}$.  Unfortunately, replacing $y_{iqj}$ by its linked value $y_{iqj}^\star$ leads to biased estimates of these coefficients. We therefore propose to use an approximation to $\hat{\tau}_{iqj}$ based on linked data that is corrected for linkage error induced bias. Let $\hat{\tau}_{iqj}^{\star \star}$ satisfy $y_{iqj}^\star=\bx_{iqj}^\prime \hat{\bbeta}_{\hat{\tau}_{iqj}^{\star \star}}$. We then define our linked data-based estimate of the M-quantile coefficient for $j \in s_{iq}$ as $\hat\tau_{iqj}^{\star}=(\lambda_q-\gamma_q)\hat \tau_{iqj}^{\star\star}+\gamma_{q} N_{iq} \hat{\bar{\tau}}_{iq}^{\star\star}$ where $\hat{\bar{\tau}}_{iq}^{\star\star}=n_{iq}^{-1}\sum_{k \in s_{iq}}\hat\tau_{iqk}^{\star\star}$. If the values $\hat\tau_{iqk}^{\star\star}$ are spread over $(0,1)$ the M-quantile coefficient for unit $j \in s_{iq}$ can be approximated by $\hat\tau_{iqj}^{\star}\approx (\lambda_q-\gamma_q)\hat\tau_{iqj}^{\star\star}+\gamma_{q} N_{iq} 0.5$. An estimated area-specific M-quantile coefficient is then computed as $\hat{\tau}_i^\star= n_i^{-1}\sum_{j \in s_{iq}} \hat{\tau}_{iqj}^\star$, and the M-quantile predictor of the area $i$ mean $\bar{Y}_i$ using linked data (the $^\star$M-quantile predictor) becomes
	{\color{black}
		\begin{equation}\label{MQ_predictor_linked}
		\hat{\bar{Y}}_i^{\star MQ}=N_i^{-1}\left\{ n_i \bar{y}_{si}^{\star}+(N_i-n_i)\bar{\bx}_{ri}^{\star\prime}\hat{\bbeta}_{\hat{\tau}_i^{\star}}^{\star}\right\}
		\end{equation}
	}

	\subsection{MSE estimation for the $^\star$M-quantile predictor}\label{sec:mse:mq}
	In this section we develop an MSE estimator for $\hat{\bar{Y}}_i^{\star MQ}$ based on the linearisation approach of \citet{Cha14}. Conditioning on the value of $\hat{\tau}_i^{\star}$, the prediction variance of \eqref{MQ_predictor_linked} is
	{\color{black}
		\begin{equation}\label{var_pred_MQ}
		V_{\bA,M}(\hat{\bar{Y}}_i^{\star MQ}-\bar{Y}_i|\hat{\tau}_i^{\star})=(1-n_i/N_i)^2\left( \bar{\bx}_{ri}^{\star\prime} V_{0 \bA,M}(\hat{\bbeta}_{\hat{\tau}_i^{\star}}^{\star})\bar{\bx}_{ri}^{\star} +V_{0 \bA,M}(\bar{e}_{ri}^{\star})\right),
		\end{equation}
	}
	where the subscript of $0$ defines true values under this area-specific model. Put $E_{0 \bA,M}(\hat{\tau}_i^{\star})=\tau_{0i}^{\star}$. A first-order approximation to $V_{0 \bA, M}(\hat{\bbeta}_{\hat{\tau}_i^{\star}}^{\star})$ is then
	\begin{equation}\label{var_beta0}
	V_{0 \bA, M}(\hat{\bbeta}_{\hat{\tau}_i^{\star}}^{\star})=[E_{0 \bA,M}(\partial_{\bbeta_\tau^\star} \bH|\tau=\tau_{0i}^{\star})]^{-1}V_{0 \bA,M}(\bH(\bbeta_{\tau_{0i}^{\star}}^{\star}))\left( [E_{0 \bA,M}(\partial_{\bbeta_\tau^\star} \bH|\tau=\tau_{0i}^{\star})]^{-1} \right)^{\prime}+o(n^{-1}).
	\end{equation}
	An estimator of \eqref{var_beta0} is
	\begin{eqnarray}
	\nonumber \hat{V}_{\bA,M}(\hat{\bbeta}_{\hat{\tau}_i^{\star}}^{\star})&=&\frac{n}{n-p}\left(\sum_{i=1}^{D}\sum_{q \in i}\bX_{siq}^{\star\prime}\hat\bUpsilon_{siq\tau}^{-1/2}\hat \bPsi_{siq \tau} \hat\bUpsilon_{siq\tau}^{-1/2}\bX_{siq}^{\star}\right)^{-1} \left(\sum_{i=1}^{D}\sum_{q \in i}\bX_{siq}^{\star\prime}\hat\bUpsilon_{siq\tau}^{-1/2}   \hat\bPhi_{siq \tau}   \hat\bUpsilon_{siq\tau}^{-1/2}\bX_{siq}^{\star}\right)\\
	& &\left( \left(\sum_{i=1}^{D}\sum_{q \in i}\bX_{siq}^{\star\prime}\hat\bUpsilon_{siq\tau}^{-1/2}\hat \bPsi_{siq \tau}\hat\bUpsilon_{siq\tau}^{-1/2}\bX_{siq}^{\star}\right)^{-1}\right)^\prime,
	\end{eqnarray}
	where $\hat\bUpsilon_{siq\tau}$ is the plug-in estimate of $\bUpsilon_{siq\tau}$ defined by $\tau=\hat\tau_i^\star$, with $\hat \bPsi_{siq \tau}=diag\left(\psi_{\tau}^{\prime} \left \{ \hat \bUpsilon_{siq\tau}^{-1/2}(\by_{siq}^{\star}-\bX_{siq}^{\star\prime}\hat{\bbeta}_{\hat{\tau}_i^{\star}}^{\star})  \right\}\right) $ and $\hat\bPhi_{siq \tau}=diag\left(\psi_{\tau}^2 \left \{\hat \bUpsilon_{siq\tau}^{-1/2}(\by_{siq}^{\star}-\bX_{siq}^{\star\prime}\hat{\bbeta}_{\hat{\tau}_i^{\star}}^{\star})  \right\}\right)$. An estimator of \eqref{var_pred_MQ} is then
	\begin{equation}\label{est_var_pred_MQ}
	\hat V_{\bA,M}(\hat{\bar{Y}}_i^{\star MQ})=(1-n_i/N_i)^2\left( \bar{\bx}_{ri}^{\star \prime} \hat{V}_{\bA,M}(\hat{\bbeta}_{\hat{\tau}_i^{\star}}^{\star})\bar{\bx}_{ri}^\star +\hat{V}_{\bA,M}(\bar{e}_{ri}^{\star})\right),
	\end{equation}
	where 
	{\color{black}
		\begin{equation}
		\hat{V}_{\bA,M}(\bar{e}_{ri}^{\star})=(N_i-n_i)^{-1}(n-1)^{-1}\sum_{h=1}^D\sum_{j \in s_h}(y_{hj}^{\star}-\bx_{hj}^{\star\prime}\hat{\bbeta}_{\hat{\tau}_h^{\star}}^{\star} )^2.
		\end{equation}
	}
	An estimator of the area-specific bias of $\hat{\bar{Y}}_i^{\star MQ}$ is
	{\color{black}
		\begin{equation}\label{biasMQ}
		\hat{B}_{\bA, M}(\hat{\bar{Y}}_i^{\star MQ})=N_{i}^{-1}\left( \sum_{h=1}^D\sum_{q \in h}\sum_{j \in s_{hq}}w_{iqj}\bx_{hqj}^{\star\prime}\hat{\bbeta}_{{\hat{\tau}}_h^{\star}}^{\star}-\sum_{j \in U_i}\bx_{ij}^{\star\prime}\hat{\bbeta}_{{\hat{\tau}}_i^{\star}}^{\star}   \right),
		\end{equation}
	}
	{\color{black}
		where $w_{iqj}=b_{iqj}+I(j \in i \bigcap q)$ and 
		\begin{displaymath}
		\bb_{iq}=\{b_{iqj}\}=\hat\bUpsilon_{siq\tau}^{-1/2}\hat\bW_{siq\tau}^{\star}\hat\bUpsilon_{siq\tau}^{-1/2}\bX_{siq}^{*}\left(\bX_{siq}^{*\prime}\hat \bUpsilon_{siq\tau}^{-1/2} \hat\bW_{siq \tau}^{\star}\hat\bUpsilon_{siq\tau}^{-1/2}\bX_{siq}^{*}\right)^{-1}(N_{iq}-n_{iq})\bar{\bx}_{riq}^{\star},
		\end{displaymath}
		where $\hat\bW_{siq\tau}^\star$ is the plug-in estimate of $\bW_{siq\tau}^\star$ defined by $\tau=\hat\tau_i^\star$, and where $\bar{\bx}_{riq}^{\star}$ denotes the vector of column averages of $\bX_{iq}^{\star}$ restricted to the $N_{iq}-n_{iq}$ non-sampled units in the small area $i$ and block $q$.}
	
	The final expression for the estimator of the conditional MSE of $\hat{\bar{Y}}_i^{\star MQ}$ is the sum of \eqref{est_var_pred_MQ} and the square
	of \eqref{biasMQ}:
	\begin{equation}\label{MSE_MQ}
	\widehat{MSE}_{\bA,M}(\hat{\bar{Y}}_i^{\star MQ}|\hat\tau_i^{\star})=\hat{V}_{\bA, M}(\hat{\bar{Y}}_i^{\star MQ})+\hat{B}_{\bA,M}^2(\hat{\bar{Y}}_i^{\star MQ}).
	\end{equation}
	
	{\color{black}
		Following the approach of  \citet{Bianchi15}, we approximate the contribution to the MSE of (\ref{MQ_predictor_linked}) due to estimation of the area M-quantile coefficient $\tau_i^{\star}$ by 
		\begin{equation}
		V_{\bA,M}(\hat\tau_{i}^{\star})=\sum_{q \in i}\bar{\bx}_{iq}\bm{G}_{\tau_{i}^{\star}}^{\prime}\bm{G}_{\tau_{i}^{\star}}\bar{\bx}_{iq}^{\prime}v^2_{\hat\tau_{i}^{\star}},
		\end{equation}
		where $\bm{G}_{\tau_{i}^{\star}}=n^{-1}\sum_{h=1}^{D}\sum_{q \in h}\left(\bH^{-1}_{shq \tau_i^{\star}}\left \{\partial_{\tau_i^{\star}} \bL_{shq \tau_i^{\star}}-  \partial_{\tau_i^{\star}} \bH_{shq \tau_i^{\star}} \bH_{shq \tau_i^{\star}}^{-1} \bL_{shq \tau_i^{\star}} \right \}   \right)$ with \linebreak $\bH_{shq \tau_i^{\star}}=\bX_{shq}^{\star\prime}\bUpsilon_{shq \tau_i^{\star}}^{-1/2}\bW_{shq \tau_i^{\star}}^{\star} \bUpsilon_{shq \tau_i^{\star}}^{-1/2}\bX_{shq}^{\star}$,  $\bL_{shq \tau_i^{\star}}=\bX_{shq}^{\star \prime}\bUpsilon_{shq \tau_i^{\star}}^{-1/2}\bW_{shq \tau_i^{\star}}^{\star} \bUpsilon_{shq \tau_i^{\star}}^{-1/2} \by_{shq}^{\star}$, \linebreak $\partial_{\tau_i^{\star}} \bH_{shq \tau_i^{\star}}=\bX_{shq}^{\star\prime}\bUpsilon_{shq \tau_i^{\star}}^{-1/2}\partial_{\tau_i^{\star}}\bW_{shq \tau_i^{\star}}^{\star} \bUpsilon_{shq \tau_i^{\star}}^{-1/2}\bX_{shq}^{\star}$, $\partial_{\tau_i^{\star}}\bL_{shq  \tau_i^{\star}}=\bX_{shq}^{\star \prime}\bUpsilon_{shq \tau_i^{\star}}^{-1/2}\partial_{\tau_i^{\star}}\bW_{shq\tau_i^{\star}}^{\star} \bUpsilon_{shq \tau_i^{\star}}^{-1/2} \by_{shq}^{\star}$, $\partial_{\tau_i^{\star}}\bW_{shq\tau_i^{\star}}^{\star} =2 \bUpsilon_{shq\tau_i^{\star}}^{1/2}\Big|\psi \left \{ \bUpsilon_{shq\tau_i^{\star}}^{-1/2}(\by_{shq}^{\star}-\bX_{shq}^{\star\prime}\bbeta_{\tau_{i}^{\star}}^{\star})  \right\}\Big|\left \{(\by_{shq}^{\star}-\bX_{shq}^{\star\prime}\bbeta_{\tau_{i}^{\star}}^{\star})  \right\}^{-1}$ and \linebreak $v^2_{\hat\tau_{i}^{\star}}=n_{i}^{-1}\sum_{j=1}^{n_i}(\hat{\tau}_{ij}^{\star}-\hat{\tau}_i^{\star})^2$. This expression has the plug-in estimator
		
		\[
		\hat V_{\bA,M}(\hat\tau_{i}^{\star})=\sum_{q \in i}\bar{\bx}_{iq}\hat{\bm{G}}_{\hat \tau_{i}^{\star}}^{\prime}\hat{\bm{G}}_{\hat\tau_{i}^{\star}} \bar{\bx}_{iq}^{\prime}\hat{v}^2_{\hat\tau_{i}^{\star}}.
		\]
		The final form of the MSE estimator of $\hat{\bar{Y}}_i^{\star MQ}$ is then
		\begin{equation}\label{est_mse_est_mq}
		\widehat{MSE}_{\bA,M}(\hat{\bar{Y}}_i^{\star MQ})=\hat{V}_{\bA, M}(\hat{\bar{Y}}_i^{\star MQ})+\hat{B}_{\bA,M}^2(\hat{\bar{Y}}_i^{\star MQ})+\hat V_{\bA,M}(\hat\tau_{i}^{\star}).
		\end{equation}
	}

	\section{Model-based simulations}\label{mod:based}
	We used model-based simulations of the various linked data based small area predictors described in Sections \ref{sec:eblup_star}, \ref{sec:REBLUP} and \ref{sec:MQ}, as well as their corresponding MSE estimators, to illustrate their performance in situations where there are both linkage errors as well as actual population outliers. The synthetic populations underpinning these simulations are based on those used by \citet{Cha14} with some modifications.
	
	Values for $y$ are generated from the equation $y_{ij} =100+5x_{ij} +u_i+\epsilon_{ij}$ $j=1,\dots,N_i$, $i=1,\dots,D$; values for $x$ are generated independently from a lognormal distribution with a mean of $1.0$ and a standard deviation of $0.5$ on the log-scale. The population is divided in 40 areas ($i=1,\dots,D=40$) each of size $N_i=100$. The random components $u_i$ and $\epsilon_{ij}$ are generated independently according to two scenarios:
	\begin{itemize}
		\item[(0,0)] $u\sim N(0,3)$ and $\epsilon \sim N(0,6)$. In this scenario there are artificial outliers caused by linkage errors.
		\item[(e,u)] $u\sim N(0,3)$ for areas $1-36$, $u\sim N(0,20)$ for areas $37-40$ and $\epsilon \sim \delta N(0,6)+(1-\delta)N(0,150)$ where $\delta$ is an independently generated Bernoulli random variable with $Pr(\delta=1)=0.97$, i.e. the individual effects are independent draws from a mixture of two normal distributions, with $97\%$ on average drawn from a `well-behaved' $N(0,6)$ distribution and $3\%$ on average drawn from an outlier $N(0, 150)$ distribution.
	\end{itemize}
	
	Linked data pairs $(x_{ij}; y_{ij}^{\star})$ are then generated using the exchangeable linkage errors model (\ref{lem}) with correct linkage probabilities $\lambda_q = 1.0, 0.9, 0.6$ and $0.4$ for blocks 1, 2, 3 and 4, respectively. In each area there are 25 units for each block, assigned randomly. In scenario $(e,u)$ there are both artificial and real outliers.
	
	Samples of size $n_i=5$ are selected by simple random sampling without replacement within each area.  As the blocks are not considered in the sampling design, most area-specific samples do not include units from every block. 
	
	Each scenario is independently simulated $1,000$ times. For each simulation the population values are generated according to the underlying scenario, a sample is selected in each area and the sample data are then used to compute estimates of each of the actual area means for $y$.
	Six different estimators are used for this purpose: the standard EBLUP, $\hat{\bar{Y}}_i^{EBLUP}$ \citep{RaoMol15}, which serves as a reference, its corrected version in case of linkage error $\hat{\bar{Y}}_i^{\star EBLUP}$, the REBLUP estimator of \citet{Sin09}, $\hat{\bar{Y}}_i^{REBLUP}$, equation \eqref{REBLUP}, and its corrected version $\hat{\bar{Y}}_i^{\star REBLUP}$, expression \eqref{REBLUP_star}, the estimator based on M-quantile regression model $\hat{\bar{Y}}_i^{MQ}$ \eqref{MQ_predictor} and its corrected version with linked data $\hat{\bar{Y}}_i^{\star MQ}$ \eqref{MQ_predictor_linked}. In all cases the influence function $\psi$ is a Huber-type function with tuning constant $c=1.345$. For  $\hat{\bar{Y}}_i^{\star EBLUP}$ we consider two different methods for estimating the area-specific random effects: prediction of the random effects as in expression \eqref{eblup_res_1} and prediction of the random effects neglecting the second addend in \eqref{eblup_res_2}. We indicate this alternative estimator as $\hat{\bar{Y}}_i^{\star \star EBLUP}$.
	
	For each estimator and for each small area, we computed the Monte Carlo estimate of the percentage of relative bias and the percentage of Relative Root MSE (RRMSE) and the corresponding efficiency. The relative bias of an estimator $\hat{\bar{Y}}_i$ for the actual mean $\bar{Y}_i$ of area $i$ is the average across simulations of the errors $\hat{\bar{Y}}_i-\bar{Y}_i$ divided by the corresponding average value of $\bar{Y}_i$, its RRMSE is the square root of the average across simulations of the squares of these errors, again divided by the average value of $\bar{Y}_i$, and its efficiency (EFF) is the value of the ratios of the actual MSE of each predictors to the actual MSE of the corresponding EBLUP. Table \ref{table_scen1} shows median values for these performance measures for the various simulation scenarios and estimators.
	
	The results set out in Table \ref{table_scen1} confirm our expectations regarding the behaviour of the corrected predictors based on the linked data. They show smaller bias and higher efficiency than the traditional small area estimators in both scenarios. The estimator $\hat{\bar{Y}}_i^{\star \star EBLUP}$ is best in terms of bias, whereas $\hat{\bar{Y}}_i^{\star MQ}$ recorded the lowest values of RRMSE, with $\hat{\bar{Y}}_i^{\star REBLUP}$ performing similarly. We see that claims in the literature \citep{RaoMol15, Cha06} about the superior outlier robustness of REBLUP and M-quantile-based estimators compared with the EBLUP certainly hold true with both artificial outliers - the $(0,0)$ case - and with artificial as well as real outliers - the $(e,u)$ case.
	
	\begin{table}
		\begin{center}\begin{tabular}{lrrrr}\hline 
				Predictor & \multicolumn{4}{l}{Results ($\%$) for the following scenarios:} \\ 
				&\multicolumn{2}{c}{Relative bias} & \multicolumn{2}{c}{RRMSE (EFF)}\\
				& $(0,0)$ & $(e,u)$ & $(0,0)$ &  $(e,u)$ \\\hline
				$\hat{\bar{Y}}_i^{EBLUP}$& 0.00 &  -0.01& 1.37 (100.0) &  1.42 (100.0)\\
				$\hat{\bar{Y}}_i^{\star EBLUP}$& 0.03 & 0.01 & 1.26 (91.9) &  1.28 (90.2)  \\
				$\hat{\bar{Y}}_i^{\star \star EBLUP}$ & 0.00 & 0.00 & 1.29 (94.6) &  1.36 (95.3)\\
				$\hat{\bar{Y}}_i^{REBLUP}$ & -0.06 & -0.06& 1.16 (84.3) &  1.20 (83.7) \\
				$\hat{\bar{Y}}_i^{\star REBLUP}$& -0.09 & -0.09 & 1.14 (82.7)&  1.17 (81.2) \\
				$\hat{\bar{Y}}_i^{MQ}$  & -0.19 &-0.17 & 1.31 (94.8) & 1.34 (92.9)\\
				$\hat{\bar{Y}}_i^{\star MQ}$ &-0.04 &-0.05 & 1.12 (81.4) & 1.17 (80.8)\\ \hline
			\end{tabular} \caption{\label{table_scen1} Model-based simulation results: median values of the percentage of relative bias and RRMSE of predictors of small area means with $n_i=5$. In parenthesis the values of the efficiency (EFF).}
		\end{center}
	\end{table}
	
	The performances of the MSE estimators for the EBLUP, REBLUP and M-quantile-based predictors are evaluated in Table \ref{table_scen2}. Here we are mainly interested in the performances of the MSE estimators for the corrected predictors based on the linked data. We assess the MSE estimators of $\hat{\bar{Y}}_i^{\star EBLUP}$ and $\hat{\bar{Y}}_i^{\star \star EBLUP}$, implemented via a modification of the unconditional MSE estimator \eqref{mse_eblup} following \citet{Pra90}. MSE estimation of $\hat{\bar{Y}}_i^{\star REBLUP}$ uses the linearization-based MSE estimator \eqref{est_mse_u_star},  presented in Section  \ref{mse_REBLUP}. For $\hat{\bar{Y}}_i^{\star MQ}$ the MSE estimator \eqref{est_mse_est_mq} of Section \ref{sec:mse:mq} is evaluated. We also consider the performances of the MSE estimators for $\hat{\bar{Y}}_i^{EBLUP}$ \citep{Pra90},  $\hat{\bar{Y}}_i^{REBLUP}$ and $\hat{\bar{Y}}_i^{MQ}$ \citep{Cha14} under both scenarios. Table \ref{table_scen2} shows the medians of the area-specific percentage relative bias and percentage RRMSE of these MSE estimators. We see that the MSE estimator for $\hat{\bar{Y}}_i^{\star \star EBLUP}$ tends to be somewhat biased low under the $(0,0)$ scenario. The MSE estimator for $\hat{\bar{Y}}_i^{\star MQ}$ is less biased than the corresponding estimator for $\hat{\bar{Y}}_i^{\star REBLUP}$ for this scenario. In general, the proposed MSE estimators work well under linkage errors in scenario $(0,0)$. Under the $(e,u)$ scenario the MSE estimators of $\hat{\bar{Y}}_i^{\star EBLUP}$ and $\hat{\bar{Y}}_i^{\star \star EBLUP}$ and the MSE estimator of $\hat{\bar{Y}}_i^{\star REBLUP}$ all tend to overestimate the actual MSE, whereas the MSE estimator for $\hat{\bar{Y}}_i^{\star MQ}$ is slightly negatively biased. We also see that the MSE estimators of $\hat{\bar{Y}}_i^{EBLUP}$, $\hat{\bar{Y}}_i^{\star EBLUP}$ and $\hat{\bar{Y}}_i^{\star \star EBLUP}$ are generally more stable than those for the REBLUP and M-quantile-based predictors.
	

	\begin{table}
		\begin{center}\begin{tabular}{lrrrr}\hline 
				Predictor &\multicolumn{4}{c}{Results ($\%$) for the following scenarios:} \\ 
				&\multicolumn{2}{c}{Relative bias} & \multicolumn{2}{c}{RRMSE}\\
				& $(0,0)$ & $(e,u)$ & $(0,0)$ &  $(e,u)$ \\\hline
				$\widehat{MSE}(\hat{\bar{Y}}_i^{EBLUP})$& -3.9 & 2.8 &22.5 &32.1  \\
				$\widehat{MSE}_{\bA,M}(\hat{\bar{Y}}_i^{\star EBLUP})$& 0.7 &14.1 & 21.5& 35.3  \\
				$\widehat{MSE}_{\bA, M}(\hat{\bar{Y}}_i^{\star \star EBLUP})$&  -3.1 & 6.9  & 21.8  & 30.8  \\
				$\widehat{MSE}(\hat{\bar{Y}}_i^{REBLUP})$ & 9.1 &6.1 & 52.8 & 44.4 \\
				$\widehat{MSE}_{\bA, M|\bu}(\hat{\bar{Y}}_i^{\star REBLUP})$ &-0.4 & 19.2&51.1 & 63.0\\
				$\widehat{MSE}(\hat{\bar{Y}}_i^{MQ})$    &-10.3 & -12.6& 57.6 & 56.9\\
				$\widehat{MSE}_{\bA, M}(\hat{\bar{Y}}_i^{\star MQ})$ & -2.7 & -4.7 & 43.1& 44.1\\ \hline
			\end{tabular} \caption{\label{table_scen2} Median values of percentage of relative bias and RRMSE of RMSE estimators in model-based simulation experiments.}
		\end{center}
	\end{table}
	
	\section{Design-based simulation}\label{sec:db}
	Design-based simulations complement model-based simulations for robust SAE since they allow us
	to evaluate the performance of SAE methods in the context of a real population and realistic
	sampling methods where we do not know the precise source of outlier contamination. From a finite
	population perspective we believe that this type of simulation constitutes a more practical and
	appropriate representation of SAE performance.
	
	The synthetic population underpinning the design-based simulation is based on the simulation experiment reported in \cite{Briscolini2018}; it comes from the European Statistical System Data Integration project \citep[ESSnet,][]{ESSnet} and from the Survey on Household Income and Wealth, Bank of Italy (SHIW), whose data are freely available in anonymous form. Specifically, the synthetic ESSnet population contains information on over $26,000$ individuals including \textit{name}, \textit{surname}, \textit{gender} and \textit{date of birth}. Two new variables have been added to the original data set: the \textit{annual income} and the \textit{domain} indicator. The latter comprises 18 areas resulting from  aggregation of  Italian administrative regions.
	
	Following  \cite{Briscolini2018}, we carry out a realistic record linkage and SAE simulation experiment by perturbing the ESSnet data set via the introduction of missing values and typos in some potential linking variables (name, surname, gender and date of birth). Moreover, for the purposes of the simulation study, annual income has been removed from the perturbed data set and the corresponding value of \textit{annual consumption} obtained from the SHIW survey has been added. 
	
	The classical version of the probabilistic record linkage model by \citet{Fellegi69} and \citet{Jaro89} has been implemented by using the function \texttt{compare.linkage} of the package \texttt{RecordLinkage} in  \texttt{R} \citep{recordlink} to link the perturbed data set with the original register population by using  \textit{surname} as key-variable and  \textit{age}, grouped in four categories, and \textit{domain} as blocking variables. The domain indicator has been used as block in the linkage process to guarantee the assumption that both registers include an area identifier measured without error (see Section \ref{sec:ass}). After the linkage process the proportion of correct links for the four categories of age are $\lambda_q=(0.86, 0.93, 0.88, 0.91)$.
	
	The aim of the design-based simulation is to compare the performance of different estimators, and their MSE estimators, of the mean consumption in each domain under repeated sampling from a fixed population using income as  the auxiliary variable. A total of $1,000$ independent random samples of size $n=268$ were then taken from the synthetic fixed population by randomly selecting units in the 18 domains, with sample sizes proportional to domain sizes unless the resulting size was less than 5, which was set as the minimum domain sample size. 
	
	Table \ref{table_db} shows the median percentage relative biases, the median percentage RRMSEs and the efficiency of the same estimators that were evaluated in Section \ref{mod:based} while Table \ref{table_db1} reports the median
	relative percentage biases and the median percentage RRMSEs of the corresponding estimators of the MSEs for
	these estimators. Here we see that estimators developed to allow for linkage error work well in terms of both bias and RRMSE compared with the unmodified  EBLUP, REBLUP and M-quantile-based predictors that ignore linkage error. The $\hat{\bar{Y}}_i^{\star EBLUP}$ and $\hat{\bar{Y}}_i^{\star \star EBLUP}$ estimators perform best in terms of bias, whereas $\hat{\bar{Y}}_i^{\star REBLUP}$ is best in terms of RRMSE. From these results we conclude that estimators that correct for linkage error seem to offer the most balanced performance in terms of both bias and MSE for this population.
	
	With reference to MSE estimation, Table \ref{table_db1} indicates that on average across areas the MSE estimator for $\hat{\bar{Y}}_i^{\star EBLUP}$ and $\hat{\bar{Y}}_i^{\star \star EBLUP}$ performs better than the MSE estimators of $\hat{\bar{Y}}_i^{\star REBLUP}$, $\hat{\bar{Y}}_i^{\star MQ}$ that are based on the linearization method. Furthermore, the MSE estimator for $\hat{\bar{Y}}_i^{\star REBLUP}$ improves on the MSE estimator for $\hat{\bar{Y}}_i^{\star MQ}$ in terms of efficiency. 
	
	The analysis in Table  \ref{table_db1} focuses on comparison of median estimation performance across areas. The relationship between the `true' (empirical) RMSE of each estimator and its estimator for each area is shown in Figure \ref{MSE_boxplot}, where boxplots illustrate the variability in the RMSE ratio, defined as the ratio of the average estimated RMSE for each area to the true RMSE. We can see that, as expected, the MSE estimators proposed for linkage error corrected estimators perform better than MSE estimators for the EBLUP, REBLUP and M-quantile-based predictors.

	\begin{table}
		\begin{center}\begin{tabular}{lrrr}\hline 
				Predictor&\multicolumn{1}{c}{Relative bias} & \multicolumn{1}{c}{RRMSE} & \multicolumn{1}{c}{EFF}\\ \hline
				$\hat{\bar{Y}}_i^{EBLUP}$ & -0.23 & 7.40&  100.0 \\
				$\hat{\bar{Y}}_i^{\star EBLUP}$& 0.44 &6.40  & 86.6  \\
				$\hat{\bar{Y}}_i^{\star \star EBLUP}$& 0.38 & 6.45 &  86.8 \\
				$\hat{\bar{Y}}_i^{REBLUP}$& -2.10 & 5.35 & 81.3 \\
				$\hat{\bar{Y}}_i^{\star REBLUP}$& -0.67 & 4.96 & 78.7 \\
				$\hat{\bar{Y}}_i^{MQ}$   & -3.88 & 7.47 & 101.4\\
				$\hat{\bar{Y}}_i^{\star MQ}$  & -2.63 & 6.72 & 86.7\\ \hline
			\end{tabular} \caption{\label{table_db} Design-based simulation results: median values of relative bias, RRMSE and efficiency (EFF) of predictors of small area means with $n=262$.}
		\end{center}
	\end{table}

	\begin{table}
		\begin{center}\begin{tabular}{lrr}\hline 
				Predictor  &\multicolumn{1}{c}{Relative bias} & \multicolumn{1}{c}{RRMSE}\\ \hline
				$\widehat{MSE}(\hat{\bar{Y}}_i^{EBLUP})$& 49.9 & 95.6  \\
				$\widehat{MSE}_{\bA,M}(\hat{\bar{Y}}_i^{\star EBLUP})$& -2.5 & 47.9  \\
				$\widehat{MSE}_{\bA,M}(\hat{\bar{Y}}_i^{\star \star EBLUP})$&  -3.3 & 48.0 \\
				$\widehat{MSE}(\hat{\bar{Y}}_i^{REBLUP})$&  12.2 & 50.1 \\
				$\widehat{MSE}_{\bA,M|\bu}(\hat{\bar{Y}}_i^{\star REBLUP})$& 12.0& 50.4\\
				$\widehat{MSE}(\hat{\bar{Y}}_i^{MQ})$   & -22.8& 51.3 \\
				$\widehat{MSE}_{\bA,M}(\hat{\bar{Y}}_i^{\star MQ})$  & 14.4 & 59.9\\ \hline
			\end{tabular} \caption{\label{table_db1} Median values of relative bias and RRMSE of RMSE estimators in design-based simulation experiment.}
		\end{center}
	\end{table}

	\begin{figure}[h]
		\centering    
		\makebox{\includegraphics[scale = 0.50]{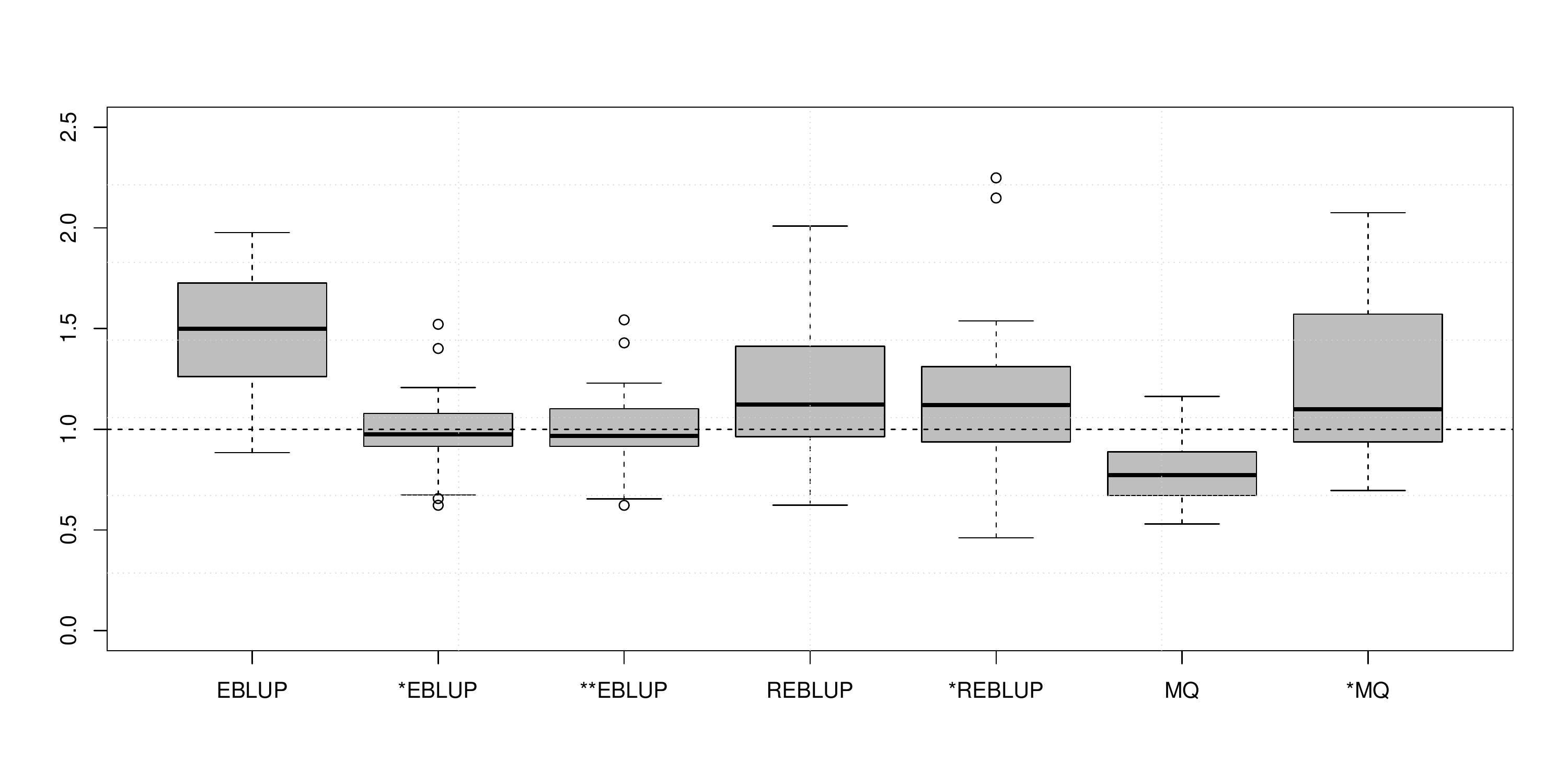} }
		\caption{\label{MSE_boxplot} Boxplots showing area-specific values of the RMSE ratios for the MSE estimators in the design-based
			scenario (the RMSE ratio is defined as the ratio of the average over repeated sampling of the RMSE estimator
			for an estimator to the actual RMSE of this estimator under repeated sampling).}
	\end{figure}

	\section{Final remarks}\label{sec:conclusions}
	
	In this paper, we propose a number of outlier-robust small area estimation methods that also allow for linkage error in the data. These proposed {\color{black}$\star$EBLUP, $\star$REBLUP and $\star$MQ-based estimators} have the potential to lead to significantly better small area estimates in important applications where linked data are available, such as in financial, economic, environmental and public health applications.
	
	The properties of the proposed estimators have been studied through model-based and design-based simulation studies. The results from these studies suggest
	that these estimators represent a promising alternative way to allow for linkage error in SAE. In particular, the empirical results reported in Sections \ref{mod:based} and \ref{sec:db} show that the proposed small area estimators are less biased and more efficient than the traditional  predictors in the presence of artificial (i.e. linkage error-induced) and real outliers. In addition, the performance of the proposed
	MSE estimators for these small area estimators seems promising, but we are aware that further research in this area is necessary. \texttt{R} code for calculating the {\color{black}$\star$EBLUP, $\star$REBLUP and $\star$MQ-based estimators} proposed in this paper and their corresponding MSE estimators is available from the authors upon request. 
	
	The approach to small area estimation using probability-linked data described above is in the spirit of \citet{Scheurenwinkler1993}, where it is suggested that one corrects the
	naive estimator using an estimate of its bias under an appropriate model for the linkage error process. In our case the adjustment we use for this purpose depends on assuming that linkage errors are generated via an ELE process and knowing the parameters (i.e. the $\lambda_q$) that characterise this process. This is highly unlikely to be the case, and these probabilities $\lambda_q$ will usually be estimated in some way. One way to estimate these parameters, suggested in \citet{OSR2009}, is via access to a random `audit' sample of the linked records in each block, where the only thing one needs to know is whether a sampled link is correct or not. This could also be accomplished by calculating the achieved linkage error rate in a training set of `gold standard' links, as would be possible if a classification-based approach to linkage was used. In general, we can think of these estimated probabilities as part of the paradata for the linkage process, which should be made available to users of the linked data. These estimates can be substituted into the expressions for the proposed small area estimators in Sections \ref{sec:eblup_star}, \ref{sec:REBLUP} and \ref{sec:MQ}. In order to assess the performance of the proposed small area estimators in this case (i.e. when linkage error rates are estimated) we have replicated the model-based simulations of Section \ref{mod:based} with $\lambda_q$ estimated by independently selecting a random `audit' sample of linked records of 25 units in each block. The results in this case show a very small increase in the empirical variability of the proposed {\color{black}$\star$EBLUP, $\star$REBLUP and $\star$MQ-based estimators}. Interested readers can contact the authors to access these more detailed results. 
	
	The extra uncertainty arising from the estimation of the probabilities $\lambda_q$ needs to be accounted for when carrying out MSE estimation for the small area estimators that use $\bA_q$ to correct for bias induced by linkage errors. This extra uncertainty can be taken into account in the estimated MSE of $\hat{\bar{Y}}_i^{\star EBLUP}$ by adding a term $g_{4i}(\hat{\bdelta},\hat{\lambda}_q)$ to expression \eqref{mse_eblup}:
	\begin{equation}
	g_{4i}(\hat{\bdelta},\hat{\lambda}_q)  = tr \left[ \frac{\partial \bb_i ^{\prime}}{\partial \lambda_q} \sum_{q \in i}\hat \bSigma_{siq} \hat{V}(\hat{\lambda}_q)  \left( \frac{\partial \bb_i ^{\prime}}{\partial \lambda_q} \right)^{\prime}      \right],
	\end{equation}
	where $\hat{V}(\hat{\lambda}_q)$ is an estimator of the variance of the estimators of the probabilities of correct linkage. If the estimates of the linkage probabilities are obtained via an `audit' sample, $\hat{V}(\hat{\lambda}_q)=(\sum_{i=1}^D n_{iq})\hat{\lambda}_q(1-\hat{\lambda}_q)$. In which case the estimator of the MSE of $\hat{\bar{Y}}_i^{\star EBLUP}$ becomes
	\begin{equation}\label{mse_eblup_1}
	\widehat{MSE}_{\bA,M}(\hat{\bar{Y}}_i^{\star EBLUP})=(1-n_i/N_i)^2\{g_{1i}(\hat{\bdelta},\hat{\lambda}_q)+g_{2i}(\hat{\bdelta},\hat{\lambda}_q)+2g_{3i}(\hat{\bdelta},\hat{\lambda}_q)+g_{4i}(\hat{\bdelta},\hat{\lambda}_q)\}+o(D^{-1}).
	\end{equation}
	
	As far as estimation of the MSE of $\hat{\bar{Y}}_i^{\star REBLUP}$ defined by \eqref{REBLUP_star} is concerned, we note that the component $h_{1i}(\hat{\btheta})$ of \eqref{est_mse_u_star} now becomes
	{\color{black}
		\begin{equation}
		h_{1i}(\hat{\btheta})=\left(1-\frac{n_i}{N_i}\right)^2 \begin{array}{c|c|c}(\bar{\bx}_{ri}^{\star\prime}~~ & ~~\bar{\bz}_{ri}^\prime~~ &~~ \uno_{q}^{\prime})\end{array}\hat{V}_{\bA, M| \bu}(\hat\btheta,\hat{\bLambda})\begin{array}{c|c|c}(\bar{\bx}_{ri}^{\star\prime}~~ &~~ \bar{\bz}_{ri}^\prime ~~&~~  \uno_{q}^{\prime})\end{array}^{\prime},
		\end{equation}
	}
	where $\hat{\bLambda}$ denotes the vector defined by the block-specific values of $\lambda_q$ and $\hat{V}_{\bA, M| \bu}(\hat\btheta,\hat{\bLambda})$ is the estimated joint variance of $\hat\btheta$ and $\hat{\bLambda}$ obtained by computing the asymptotic variance of solutions to the estimating equations.
	Using the same approach, we note that the first component of the MSE estimator \eqref{est_var_pred_MQ} of the M-quantile based estimator \eqref{MQ_predictor_linked} also depends on the extra uncertainty arising from estimation of the probabilities of correct linkage and so needs to be written as
	{\color{black}
		$$(1-n_i/N_i)^2\left(\begin{array}{c|c}(\bar{\bx}_{ri}^{\star\prime} ~~& ~~\uno_{q}^{\prime})\end{array}\hat{V}_{\bA,M}(\hat{\bbeta}_{\tau_i^{\star}}^{\star},\hat{\bLambda}) \begin{array}{c|c}(\bar{\bx}_{ri}^{\star\prime}~~ &~~ \uno_{q}^{\prime})\end{array}^{\prime} \right).$$}
	The performance of the MSE estimators {\color{black}$\star$EBLUP, $\star$REBLUP \eqref{REBLUP_star} and the $\star$MQ-based} estimator \eqref{MQ_predictor_linked} when there is extra uncertainty arising from the estimation of probabilities of correct linkage is an area of current research. 
	
	
	Despite the fact that the proposed methodologies provide encouraging results, further research remains to be done. In this paper we have assumed that both registers include an area identifier measured without error that is used in the linkage process. Consequently we do not allow   units from different areas to be erroneously linked. When this assumption is relaxed, the incidence variable $\bZ_{siq}$ becomes $\bZ_{siq}^{\star}=\bA_{siq} \bZ_{siq}$, and the area effects associated with $\by_{s}^{\star}$ are correlated across areas. This correlation needs to be taken into when calculating {\color{black}$\star$EBLUP and $\star$REBLUP}. We conjecture that this correlation has less effect on small area estimators based on an M-quantile regression model because this type of model does not assume independent area effects. 
	
	Finally, we note once more that we have assumed a simple exchangeable linkage error model because we are focussed on SAE carried out by a secondary data analyst. It would be interesting to extend our estimators to situations where the information set on the linkage process is richer and a primary analyst viewpoint can be adopted as in \citet{Briscolini2018} and \citet{Lahiri18}.
	
	\vspace{9pt}
	\textbf{Acknowledgment}: The work of Nicola Salvati has been carried out with the support of the project InGRID 2 (Grant Agreement N.
	730998, EU) and of project PRA2018-9 (From survey-based to register-based statistics: a paradigm shift using latent variable models).
	
	\vspace{9pt}
	\section*{Appendix A}
	
	\setcounter{figure}{0} \setcounter{table}{0}
	\setcounter{equation}{0}
	\renewcommand{\theequation}{A-\arabic{equation}}
	\subsection*{Proof of \eqref{jointvar}}
	
	We start noting that
	\[
	V_{\bA,M}(\by^{\star}_{siq}|\bX_{iq})=V_ME_{\bA}(\by^{\star}_{siq}|\bX_{iq})+V_{\bA}E_M(\by^{\star}_{siq}|\bX_{iq})
	\]
	Now $E_{\bA}(\by^{\star}_{siq}|\bX_{iq})=\bT_{siq}\bX_{iq}\bbeta+\bZ_{siq}\bu_i+\be^\star_{siq}=\bX^\star_{siq}\bbeta+\bZ_{siq}\bu_i+\be^\star_{siq}$ and $E_M(\by^{\star}_{siq}|\bX_{iq})=\bA_{siq}\bX_{iq}\bbeta$. As a consequence:
	\begin{eqnarray}
	\nonumber V_{\bA,M}(\by^{\star}_{siq}|\bX_{iq}) &=& V_M (\bX^\star_{siq}\bbeta + \bZ_{siq}\bu_i+\be^\star_{siq}) + V_{\bA}(\bA_{siq}\bX_{iq}\bbeta) \\
	\nonumber &=& \bZ_{siq}\boldsymbol{\Sigma}_{\bu_i}\bZ^T_{siq} + \boldsymbol{\Sigma}_{seiq}+\bV_{siq}=\boldsymbol{\Sigma}_{siq}.
	\end{eqnarray}
	
	\subsection*{Proof of Preposition 1}
	In this Section we provide the proof of Preposition 1 paralleling \cite{RaoMol15}, Section 5.6.1. Assuming $\by_s^{\star}=(\by_{siq}^\star),~i=1,\dots,D;~q=1,\dots,Q$ and $\bX_s^{\star \prime}=col_{1\leqslant i,q \leqslant D,Q}(\bX_{siq}^{\star})^\prime$ a linear estimator $\hat\mu=\ba^\prime \by_s^\star+b$ is unbiased for $\mu=\bl^\prime \bbeta+\bc \bu$ under the linear mixed model \eqref{sample_mod}, that is $E(\hat\mu)=E(\mu)$, if and only if $\ba^\prime\bX_s^\star=\bl^\prime$ and $b=0$. The $MSE$ of $\hat{\mu}$ is given by
	\begin{eqnarray}
	\nonumber MSE_{\bA,M}(\hat{\mu})& = & V_{\bA,M}(\hat{\mu}-\mu)=V_{\bA,M}(\ba^{\prime}\by_{s}^{\star}-\bl^\prime \bbeta-\bc \bu)\\
	\nonumber& = & \ba^{\prime}V_{\bA,M}(\by_{s}^{\star})\ba+\bc^{\prime}V_{\bA,M}(\bu)\bc-2\ba^{\prime}Cov_{\bA,M}(\by_{s}^{\star},\bu)\bc\\
	\nonumber& = & \ba^{\prime}\left(\bZ\bSigma_\bu \bZ^\prime+\bSigma_{se}+\bV \right)\ba+\bc^\prime \bSigma_\bu \bc-2\ba^\prime\bZ\bSigma_\bu\bc\\
	& = & \ba^{\prime}\bSigma_s \ba+\bc^\prime \bSigma_\bu \bc-2\ba^\prime\bZ\bSigma_\bu\bc,
	\end{eqnarray}
	where $\bV=V_\bA(\bA_s \bX \bbeta)$ with $\bA_s^{\prime}=col_{1\leqslant i,q \leqslant D,Q}(\bA_{siq})^\prime$ and $\bSigma_{se}$ is the variance of $\be_s^{\star}=(\be_{siq}^\star),~i=1,\dots,D;~q=1,\dots,Q$. The BLUP estimator is obtaining minimising $MSE_{\bA,M}(\hat{\mu})$ subject to the unbiasedness constraint $\ba^\prime\bX_s^\star=\bl^\prime$. Formulas \eqref{wls} and \eqref{rablup} can be obtained following Section 5.6.1 from \cite{RaoMol15} with simple changing in notation.
	
	REMARK: As $\bSigma_s$ is block-diagonal matrix, $\bSigma_s=diag(\bSigma_{siq}),~i=1,\dots,D;~q=1,\dots,Q$ equations \eqref{wls} and \eqref{rablup} are expressed using summations, that is in a form more efficient for computation.
	
	\section*{Appendix B}\label{appendixb}
	The following RCs are required to prove the Preposition 2 set out in Section \ref{sec:MSE:EBLUP} and Prepositions 3 and 4 of Section \ref{mse_REBLUP} and uses the same notation as employed there. The regularity conditions are similar to those proposed by \citet{Pra90} and \citet{Cha14} with some differences due to the presence of the linkage error.
	\begin{itemize}
		\item[] \textit{Condition 1}. The elements of $\sum_{i,q}\bX_{siq}^{\star}$ are uniformly bounded as $D \rightarrow \infty$ such that \linebreak $\sum_{i=1}^{D} \sum_{q \in i} \bX_{siq}^{\star\prime} {\bSigma}_{siq}^{-1} \bX_{siq}^{\star}=[O(D^{-1})]_{p \times p}$.
		\item[] \textit{Condition 2}. The covariances matrices $\bSigma_{\bu_i}$ and  $\bSigma_{siq}$ $i=1,\cdots,D$ and $q=1,\cdots,Q$ have linear structure and are known positive definite matrices of order $m \times m$ and $n_{iq} \times n_{iq}$ respecitvely, with elements that are also uniformly bounded as $D \rightarrow \infty$.
		\item[] \textit{Condition 3}. The elements of the matrix $\bV_s=diag(\bV_{siq})$ are uniformly bounded as $D \rightarrow \infty$.
		\item[] \textit{Condition 4}. The covariances matrices $\bSigma_{\bu_i}$, $\bSigma_{siq}$ and $\bU_{siq}$, $i=1,\cdots,D$ and $q=1,\cdots,Q$ are differentiable with respect to the variance components.
		\item[] \textit{Condition 5}. The dimension $m$ of the area random effect is a fixed finite number $\sup_{i \geqslant 1}n_i=\Delta <\infty$.
		\item[] \textit{Condition 6}. $\hat\bdelta$ is a translation-invariant unbiased estimator of $\bdelta$ as in \textit{Condition 4} of \citet{Pra90}.
		\item[] \textit{Condition 7}. The influence function $\psi$ is a bounded continuous function with a derivative which, except for a finite number of points, is defined everywhere and is also bounded as in \textit{Condition 1} of \citet{Cha14}.
		\item[] \textit{Condition 8}. There are constants $\zeta>0$ and $L < \infty$ such that, if $\br_{siq}^{\star}=\bU_{siq}^{-1/2}(\by_{siq}^{\star}-\bX_{siq}^{\star}\bbeta_0^{\psi\star})$, $\bt_{siq}^{\star}=(\bSigma_{seiq}+\bV_{siq})^{-1/2} (\by_{siq}^{\star}-\bX_{siq}^{\star}  \bbeta_0^{\psi\star} -\bZ_{siq} \bu_{0i}^{\psi \star})$ and $\bd=\bSigma_{\bu_i}^{-1/2}\bu_{0i}^{\psi \star}$ then $E_{\bA, M|\bu}|  \psi\{\br_{siq}^{\star}  \}  |^{4+\zeta}$, $E_{\bA, M|\bu}|| \partial \psi\{\br_{siq}^{\star}  \}  ||$, $E_{\bA, M|\bu}|  \psi\{\bt_{siq}^{\star}  \}  |^{4+\zeta}$, $E_{\bA, M| \bu}|| \partial \psi\{\bt_{siq}^{\star}  \}  ||$, $E_{\bA, M| \bu}|  \psi\{\bd_{siq}^{\star}  \}  |^{4+\zeta}$ and $E_{\bA, M|\bu}|| \partial \psi\{\bd_{siq}^{\star}  \}  ||$ are all bounded by $L$ as in \textit{Condition 5} in \citet{Cha14}.
		\item[] \textit{Condition 9}. $\partial_{\delta_k}\bX_{siq}^{\star} \bB_{siq}=[O(1)]_{p \times m}$  for $k=1, \dots, K$ where $\bB_{siq}$ was defined in equation \eqref{BSIQ}. This follows \textit{Condition 6} in \citet{Cha14}.
	\end{itemize}
	
	\section*{Appendix C}
	
	In this Appendix we report $\hat{V}_{\bA, M| \bu}(\tilde\theta)$ that is a sandwich-type estimator of the first order approximation of the conditional covariance matrix $V_{\bA, M| \bu}(\tilde\theta)$ defined in \cite{Cha14}.  From equations \eqref{rob1_star} and \eqref{rra} we note that $\bH(\tilde\theta)=\bzero$ where
	{\scriptsize \begin{equation*}
		\bH(\btheta)=\left(\begin{array}{l} \bH_{\bbeta^{\psi \star}} \\ \bH_{\bu^{\psi \star}}\end{array}\right)=\left(\begin{array}{l}  \sum_{i=1}^D \sum_{q \in i}\bX_{siq}^{\star \prime}\bSigma_{siq}^{-1}\bU_{siq}^{1/2}\psi\{\bU_{siq}^{-1/2}(\by_{siq}^{\star}-\bX_{siq}^{\star}\bbeta^{\psi \star})\}= \bzero\\      \sum_{i=1}^D \sum_{q \in i} \left \{\bZ_{siq}^\prime \bSigma_{seAiq}^{-1/2} \psi\left\{\bSigma_{seAiq}^{-1/2} (\by_{siq}^{\star}-\bX_{siq}^{\star}  \bbeta^{\psi \star} -\bZ_{siq} \bu_{i}^{\psi \star}) \right\}- \bSigma_{\bu_i}^{ -1/2} \psi\{ \bSigma_{\bu_i}^{-1/2}\bu_i^{\psi \star}  \} \right\} =\bzero        \end{array}\right).
		\end{equation*}}
	Then we can compute the asymptotic variance of solutions to an estimating equation to obtain a first order approximation to $V_{\bA, M| \bu}(\tilde \theta)$ following \cite{Cha14}. The sandwich-type estimator of this asymptotic approximation can be obtained as
	\begin{equation}\label{sandwich-type}
	\hat{V}_{\bA, M| \bu}(\tilde\theta)=\hat{E}_{\bA, M| \bu}(\partial_{\btheta} \bH_{0})^{-1}\left(\begin{array}{cc}  \hat{V}_{\bA, M| \bu} (\bH_{0\bbeta^{\psi\star}})& \widehat{Cov}_{\bA, M| \bu} (\bH_{0\bbeta^{\psi\star}},\bH_{0\bu^{\psi\star}})\\ \widehat{Cov}_{\bA, M| \bu} (\bH_{0\bu^{\psi\star}},\bH_{0\bbeta^{\psi\star}}) &  \hat{V}_{\bA, M| \bu} (\bH_{0\bu^{\psi\star}}) \end{array}\right)\left(\hat{E}_{\bA, M| \bu}(\partial_{\btheta} \bH_{0})^{-1}\right)^{\prime},
	\end{equation}
	where
	\begin{equation*}
	\hat{E}_{\bA, M| \bu}(\partial_{\btheta} \bH_{0})^{-1}=\left( \begin{array}{cc}  \hat{E}_{\bA, M| \bu}(\partial_{\bbeta_0^{\psi \star}}\bH_{0 \bbeta^{\psi \star}}) & - \hat{E}_{\bA, M| \bu}(\partial_{\bbeta_0^{\psi \star}}\bH_{0 \bbeta^{\psi \star}})^{-1} \hat{E}_{\bA, M| \bu}(\partial_{\bbeta_0^{\psi \star}}\bH_{0 \bu^{\psi \star}}) \hat{E}_{\bA, M| \bu}(\partial_{\bu^{\psi \star}}\bH_{0 \bu^{\psi \star}}) ^{-1}\\ \bzero &  \hat{E}_{\bA, M| \bu}(\partial_{\bu_0^{\psi \star}}\bH_{0 \bu_0^{\psi \star}}) \end{array}   \right),
	\end{equation*}
	with
	\begin{equation*}
	\hat{E}_{\bA, M| \bu}(\partial_{\bbeta_0^{\psi \star}}\bH_{0 \bbeta^{\psi \star}}) =- \sum_{i=1}^D \sum_{q \in i}\bX_{siq}^{\star \prime}\bSigma_{siq}^{-1}\bU_{siq}^{1/2}\bR_{siq}^{*}\bU_{siq}^{-1/2}\bX_{siq}^{\star },
	\end{equation*}
	
	\begin{equation*}
	\hat{E}_{\bA, M| \bu}(\partial_{\bu_0^{\psi \star}}\bH_{0 \bu_0^{\psi \star}})=-\sum_{i=1}^D \sum_{q \in i} \bZ_{siq}^\prime \bSigma_{seAiq}^{-1/2}\bT_{siq}^{\star} \bSigma_{seAiq}^{-1/2}\bZ_{siq}-\bSigma_{\bu_i}^{ -1/2} \bC_{siq}^{\star}\bSigma_{\bu_i}^{ -1/2},
	\end{equation*}
	
	\begin{equation*}
	\hat{E}_{\bA, M| \bu}(\partial_{\bbeta_0^{\psi \star}}\bH_{0 \bu_0^{\psi \star}})=-\sum_{i=1}^D \sum_{q \in i} \bZ_{siq}^\prime \bSigma_{seAiq}^{-1/2}\bT_{siq}^{\star} \bSigma_{seAiq}^{-1/2}\bX_{siq}^{\star},
	\end{equation*}
	
	\begin{equation*}
	\hat{V}_{\bA, M| \bu} (\bH_{0\bbeta^{\psi\star}})=(n-p)^{-1} \sum_{i=1}^D \sum_{q \in i}\psi^{2}\{\br_{siq}^{\star}\}\bX_{siq}^{\star \prime} \bSigma_{siq}^{-1}\bU_{siq}\bSigma_{siq}^{-1}\bX_{siq}^{\star}
	\end{equation*}
	
	\begin{equation*}
	\hat{V}_{\bA, M| \bu} (\bH_{0\bu^{\psi\star}})=(n-p)^{-1} \sum_{i=1}^D \sum_{q \in i}\psi^{2}\{\bt_{siq}^{\star}\}\bZ_{siq}^{\prime}\bSigma_{seAiq}^{-1}\bZ_{siq}
	\end{equation*}
	
	\begin{equation*}
	\widehat{Cov}_{\bA, M| \bu} (\bH_{0\bu^{\psi\star}},\bH_{0\bbeta^{\psi\star}}) =(n-p)^{-1}\sum_{i=1}^D \sum_{q \in i}\left(\psi\{\br_{siq}^{\star}\}\psi\{\bt_{siq}^{\star}\}\right)\bX_{siq}^{\star \prime} \bSigma_{siq}^{-1}\bU_{siq}^{1/2}\bSigma_{seAiq}^{-1/2}\bZ_{siq},
	\end{equation*}
	with $\bR_{siq}^{\star}$ is an $n_{iq} \times n_{iq}$ diagonal matrix of $\br_{siq}^{\star}=\bU_{siq}^{-1/2}(\by_{siq}^{\star}-\bX_{siq}^{\star}\tilde\bbeta^{\star})$ with $j$th diagonal element equal to 1 if $-c < r_{siqj}^{\star} <c$, and $0$ otherwise; $\bT_{siq}^{\star}$ is an $n_{iq} \times n_{iq}$ diagonal matrix of $\bt_{siq}^{*}=\bSigma_{seAiq}^{-1/2} (\by_{siq}^{\star}-\bX_{siq}^{\star} \tilde \bbeta^{\star} -\bZ_{siq} \tilde\bu_{i}) $ with $j$th diagonal element equal to 1 if $-c < t_{siqj}^{\star} <c$, and $0$ otherwise; $\bC_{siq}^{\star}$ is an $m \times m $ diagonal matrix of $ \bSigma_{\bu_i}^{-1/2}\tilde\bu_i^{\psi \star} $with $l$th diagonal element equal to 1 if $-c< c_{siqh}^{\star} <c$, and $0$ otherwise.
	\bibliographystyle{chicago} 
	\bibliography{Datenbank}  

\end{document}